\newcommand{\TD}{\mbox{{\sc TD}}}
\newcommand{\IG}{{\cal N}}
\newcommand{\RT}{n^{5/3}}
\newcommand{\priority}{{\sc Priority}\xspace}
\newcommand{\diff}{{\sc Diff}\xspace}
\newcommand{\randdiff}{{\sc Rand-Diff}\xspace}
\newcommand{\symdiff}{{\sc Sym-Diff}\xspace}
\newcommand{\SKB}{{\sc SKB}\xspace}
\newcommand{\cG}{{\cal G}\xspace}
\newcommand{\shortOnly}[1]{\ifthenelse{\boolean{short}}{#1}{}}
\newcommand{\onlyShort}[1]{\ifthenelse{\boolean{short}}{#1}{}}
\newcommand{\longOnly}[1]{\ifthenelse{\boolean{short}}{}{#1}}
\newcommand{\onlyLong}[1]{\ifthenelse{\boolean{short}}{}{#1}}
\begin{document}
\newcommand{\junk}[1]{}

\onlyLong{
\newtheorem{lemma}{Lemma}[section]
\newtheorem{theorem}{Theorem}[section]
}
\newtheorem{informaltheorem}[lemma]{Informal Theorem}
\newtheorem{informallemma}[lemma]{Informal Lemma}
\onlyLong{
\newtheorem{corollary}[lemma]{Corollary}
\newtheorem{definition}{Definition}[section]
\newtheorem{proposition}[lemma]{Proposition}
\newtheorem{question}{Question}
\newtheorem{problem}{Problem}
\newtheorem{remark}[lemma]{Remark}
\newtheorem{claim}{Claim}
\newtheorem{fact}{Fact}
}
\newtheorem{challenge}{Challenge}
\newtheorem{observation}{Observation}
\newtheorem{openproblem}{Open Problem}
\newtheorem{openquestion}{Open question}
\onlyLong{
\newtheorem{conjecture}{Conjecture}
}
\newtheorem{game}{Game}

\newcommand{\beq}{\begin{equation}}
\newcommand{\eeq}{\end{equation}}
\newcommand{\beas}{\begin{eqnarray*}}
\newcommand{\eeas}{\end{eqnarray*}}

\newcommand{\poly}{\mathrm{poly}}
\newcommand{\eps}{\epsilon}
\newcommand{\e}{\epsilon}
\newcommand{\polylog}{\mathrm{polylog}}
\newcommand{\rob}[1]{\left( #1 \right)} 
\newcommand{\sqb}[1]{\left[ #1 \right]} 
\newcommand{\cub}[1]{\left\{ #1 \right\} } 
\newcommand{\rb}[1]{\left( #1 \right)} 
\newcommand{\abs}[1]{\left| #1 \right|} 
\newcommand{\zo}{\{0, 1\}}
\newcommand{\zonzo}{\zo^n \to \zo}
\newcommand{\zokzo}{\zo^k \to \zo}
\newcommand{\zot}{\{0,1,2\}}

\newcommand{\en}[1]{\marginpar{\textbf{#1}}}
\newcommand{\efn}[1]{\footnote{\textbf{#1}}}

\newcommand{\prob}[1]{\Pr \left[ #1 \right]}
\newcommand{\dprob}[3]{\Pr \left[ #1 ; #2 \mbox{;} #3 \right]}
\newcommand{\dsumprob}[2]{P(#1,#2)}
\newcommand{\expect}[1]{\mathbb{E}\left[ #1 \right]}
\newcommand{\mindegree}{\delta}

\newcommand{\BfPara}[1]{\noindent {\bf #1}.}

\newcommand{\cost}{\mbox{cost}}
\newcommand{\GNS}[1]{\mbox{{\sc GNS}}(#1)}

\newcommand{\randsymdiff}{{\sc SymDiff}}

\newcounter{listCounter}
\newenvironment{Enumerate}{\begin{list}{\arabic{listCounter}.}{
        \usecounter{listCounter}\ListLengths}}{\end{list}}
\newenvironment{AlphaList}{\begin{list}{(\alph{listCounter})}{
        \usecounter{listCounter}\ListLengths\setlength{\labelwidth}{3ex}}}{\end{list}}

\newcommand{\ListLengths}{\setlength{\itemsep}{0ex}\setlength{\topsep}{1ex}\setlength{\partopsep}{0ex}}

\newenvironment{NoIndentEnumerate}{\begin{list}{\arabic{listCounter}.}{
        \usecounter{listCounter}\setlength{\leftmargin}{1em}\ListLengths}}{\end{list}}

\newenvironment{myitemize}{\begin{list}{$\bullet$}{\setlength{\leftmargin}{1em}}}{\end{list}}

\newenvironment{mylowitemize}{\begin{list}{$\bullet$}{\setlength{\leftmargin}{1em}\ListLengths}}{\end{list}}


\onlyShort{
\title{Information Spreading in Dynamic Networks under Oblivious Adversaries\thanks{Supported in part by grants NSF CCF-1422715, NSF CCF-1535929, and ONR N00014-12-1-1001}}
\author{John Augustine\inst{1}
\and Chen Avin\inst{2}
\and Mehraneh Liaee\inst{3}
\and Gopal Pandurangan\inst{4}
\and Rajmohan Rajaraman\inst{3}
}

\institute{IIT Madras, Chennai 600036, India\\
\email{augustine@iitm.ac.in}
\and
Ben-Gurion University of the Negev, Beer-Sheva 84105, Israel\\
 \email{avin@cse.bgu.ac.il}
\and
Northeastern University, Boston, 02115, USA\\
\email{\{mehraneh,rraj\}@ccs.neu.edu}
\and
University of Houston, Houston, TX 77204, USA\\
\email{gopalpandurangan@gmail.com}
}
}

\onlyLong{
\title{Information Spreading in Dynamic Networks under Oblivious Adversaries\thanks{Supported in part by grants NSF CCF-1422715, NSF CCF-1535929, and ONR N00014-12-1-1001}}
\author{John Augustine\thanks{IIT Madras, Chennai 600036, India; Email: {\tt augustine@iitm.ac.in}}
\and Chen Avin\thanks{Ben-Gurion University of the Negev, Beer-Sheva 84105, Israel; Email: {\tt avin@cse.bgu.ac.il}}
\and Mehraneh Liaee\thanks{Northeastern University, Boston, 02115, USA; Email: {\tt mehraneh@ccs.neu.edu}}
\and Gopal Pandurangan\thanks{University of Houston, Houston, TX 77204, USA; Email: {\tt gopalpandurangan@gmail.com}}
\and Rajmohan Rajaraman\thanks{Northeastern University, Boston, 02115, USA; Email: {\tt rraj@ccs.neu.edu}}
}
}

\date{}
\maketitle

\begin{abstract}

We study the problem of all-to-all information exchange, also known as
{\em gossip}, in dynamic networks controlled by an adversary that can
modify the network arbitrarily from one round to another, provided
that the network is always connected.  In the gossip problem, there
are $n$ tokens arbitrarily distributed among the $n$ network nodes,
and the goal is to disseminate all the $n$ tokens to every node.  Our
focus is on {\em token-forwarding} algorithms, which do not manipulate
tokens in any way other than storing, copying, and forwarding them.
Gossip can be completed in linear time in any static network, but an
important and basic open question for dynamic networks is the
existence of a distributed protocol that can do significantly better
than an easily achievable bound of $O(n^2)$ rounds.

In previous work, it has been shown that under adaptive adversaries
---those that have full knowledge and control of the topology in every
round and also have knowledge of the distributed protocol including
its random choices---every token forwarding algorithm requires
$\Omega(n^2/\log n)$ rounds to complete.  In this paper, we study
oblivious adversaries, which differ from adaptive adversaries in one
crucial aspect--- they are {\em oblivious to the random choices} made
by the protocol.  We consider \randdiff, a natural distributed
algorithm in which neighbors exchange a token chosen uniformly at
random from the difference of their token sets.  Previous work has
shown that starting from a distribution in which each node has a
random constant fraction of the tokens, \randdiff completes in
$\tilde{O}(n)$ rounds.  In contrast, we show that a polynomial
slowdown is inevitable under more general distributions: we present an
$\tilde{\Omega}(n^{3/2})$ lower bound for \randdiff under an oblivious
adversary.  We also present an $\tilde{\Omega}(n^{4/3})$ lower bound
under a stronger notion of oblivious adversary for a class of
randomized distributed algorithms---{\em symmetric knowledge-based
  algorithms}--- in which nodes make token transmission decisions
based entirely on the sets of tokens they possess over time.  On the
positive side, we present a centralized algorithm that completes
gossip in $\tilde{O}(n^{3/2})$ rounds with high probability, under any
oblivious adversary.
We also show an $\tilde{O}(n^{5/3})$ upper bound for \randdiff in a
restricted class of oblivious adversaries, which we call {\em
  paths-respecting}, that may be of independent interest.
\junk{
Our results are a step towards understanding the true complexity of
information spreading in dynamic networks under oblivious adversaries.}

\junk{
\noindent {\bf Keywords:} Dynamic networks, Distributed Computation, Information
Spreading, Gossip, Randomization}
\end{abstract}


\section{Introduction}
\label{sec:intro}
In a dynamic network, nodes (processors/end hosts) and communication
links can appear and disappear over time.  The networks of the current
era are inherently dynamic.  Modern communication networks (e.g.,
Internet, peer-to-peer, ad-hoc networks and sensor networks) and
information networks (e.g., the Web, peer-to-peer networks and on-line
social networks), and emerging technologies such as drone swarms are
dynamic networked systems that are larger and more complex than ever
before.  Indeed, many such networks are subject to continuous
structural changes over time due to sleep modes, channel fluctuations,
mobility, device failures, nodes joining or leaving the system, and
many other
factors~\cite{kempe02connectivity,sirocco,p2p-soda,cooper-frieze,flaxman04efficient,bollobas04the-diameter,liben-nowell05geographic}.
Therefore the formal study of algorithms for dynamic networks have
gained much popularity in recent years and many of the classical
problems and algorithms for static networks were extended to dynamic
networks.  During the past decade, new dynamic network models have
been introduced to capture specific
applications~\cite{broder00graph,bollobas04the-diameter,kempe02connectivity,ferreira04building,flaxman04efficient,odell+w:dynamic},
and the last few years have witnessed a burst of research activity on
broadcasting, flooding, random-walk based, and gossip-style protocols
in dynamic
networks~\cite{Kuhn2010Distributed,haeupler2010analyzing,Gurevich2009Correctness,Clementi2009Broadcasting,sarwate2009impact,baumann2009parsimonious,avin08explore,podc13,focs15,Clementi2008Flooding,Georgiou2008On-the-complexity,Augustine2013Storage,Dutta2013On-the-Complexity,casteigts2012time,Sarma2012Fast,baumann2011parsimonious,haeupler2011faster,kuhn2011coordinated}.

Our paper continues this effort and studies a fundamental problem of
information spreading, called {\em $k$-gossip}, on dynamic networks.
In $k$-gossip (also referred to as {\em $k$-token dissemination}), $k$
distinct pieces of information (tokens) are initially present in some
nodes, and the problem is to disseminate all the tokens to all the
nodes, under the constraint that one token can be sent on an edge per
round of synchronous communication.  This problem is a fundamental
primitive for distributed computing; indeed, solving $n$-gossip, where
each node starts with exactly one token, allows any function of the
initial states of the nodes to be computed, assuming the nodes know
$n$~\cite{Kuhn2010Distributed}.  This problem was analyzed for static
networks by Topkis~\cite{topkis:disseminate}, and was first studied on
dynamic networks for general $k$ in~\cite{Kuhn2010Distributed}, and
previously for the special case of one token and a random walk in
\cite{avin08explore}.

In this paper, we consider {\em token-forwarding} algorithms, which do
not manipulate tokens in any way other than storing, copying, and
forwarding them.  Token-forwarding algorithms are simple and easy to
implement, typically incur low overhead, and have been widely studied
(e.g, see~\cite{leightonbook,pelegbook}).  In any $n$-node {\em
  static} network, a simple token-forwarding algorithm that pipelines
tokens up a rooted spanning tree, and then broadcasts
them down the tree completes $k$-gossip in $O(n + k)$
rounds~\cite{topkis:disseminate,pelegbook}; this is tight since
$\Omega(n+k)$ is a trivial lower bound due to bandwidth
constraints.  A central question motivating our study is whether a
linear or near-linear bound is achievable for $k$-gossip on dynamic
networks.  It is important to note that algorithms that {\em
  manipulate}\/ tokens, e.g., network coding based algorithms, have
been shown to be efficient in dynamic
settings~\cite{haeupler2010analyzing}, but are harder to implement and
incur a large overhead in message sizes.

Several models have been proposed for dynamic networks in the
literature ranging from stochastic models
\cite{avin08explore,Clementi2008Flooding} to weak and strong adaptive
adversaries \cite{Kuhn2010Distributed}.  In this paper we consider one
of the most basic models known as the \emph{oblivious adversary}
\cite{avin08explore} or the \emph{evolving graph}
model~\cite{jarry04connectivity,ferreira07on-the-evaluation,ferreira04building,Sarma2012Fast}.
In this model, the adversary is unaware of any random decisions of the
algorithm/protocol and must fix the sequence of graphs before the
algorithm starts. The oblivious adversary can choose an arbitrary set
of communication links among the (fixed set) of nodes for each round,
with the only constraint being that the resulting communication graph
is connected in each round.  Formally, {\em oblivious adversary}\/
fixes an infinite sequence of connected graphs $\cG=G_1,G_2,\dots$ on
the same vertex set $V$; in round $t$, the algorithm operates on graph
$G_t$.  The adversary knows the algorithm, but is unaware of the
outcome of its random coin tosses.

The oblivious adversary model captures worst-case dynamic changes that
may occur independent of the algorithm's (random) actions. On the
other hand, an {\em adaptive}\/ adversary can choose the communication
links in every round --- depending on the actions of the algorithm ---
and is much stronger. Indeed, strong lower bounds are known for these
adversaries\cite{Dutta2013On-the-Complexity,hkuhn}: in particular,
for the {\em strongly adaptive adversary}\footnote{In each round of
  the strongly adaptive adversary model, each node first chooses a
  token to {\em broadcast} to all its neighbors, and then the
  adversary chooses a connected network for that round with the
  knowledge of the tokens chosen by each node.}, there exists a
$\tilde{\Theta}(nk)$ lower bound\footnote{The notation
  $\tilde{\Omega}$ hides polylogarithmic factors in the denominator
  and $\tilde{O}$ hides polylogarithmic factors in the numerator.} for
$k$-gossip, essentially matching the trivial upper bound of $O(nk)$.


The main focus of this paper is on closing the gap for the complexity
of $k$-gossip under an oblivious adversary between the straightforward
upper bound of $O(nk)$ and the trivial lower bound of
$\Omega(n+k)$\junk{\footnote{Understanding the complexity of oblivious
  adversary has been challenging in other settings and problems as
  well, e.g., in consensus \cite{barjoseph} and in Byzantine agreement
  in dynamic networks \cite{podc13}.}}.  In particular, can we achieve
an upper bound of the form $\tilde{\Theta}(n+k)$?  In fact, it is not
even clear whether there even exists a {\em centralized}\/ algorithm
that can do significantly better than the naive bound of
$O(nk)$.\junk{ (i.e., $O(n^{2-\epsilon})$ rounds (for any $\epsilon >
  0$)) under an oblivious adversary. In this work, we make progress
  towards answering this key question. We make progress towards
  answering this key question.}

The starting point of our study is \randdiff, a simple local
randomized algorithm for $k$-gossip.
In each round of \randdiff, along every existing edge $(u,v)$ at that
round, $u$ sends a token selected uniformly at random from the
difference between the set of tokens held by $u$ and that held by node
$v$, if such a token exists.  Note that in \randdiff, a node is aware
of the tokens that its neighbours have and
therefore \randdiff guarantees progress, i.e., exchange of a missing
token along \emph{every} edge where such a progress is possible.
\junk{It is also powerful in the sense that it may send
  \emph{different} tokens along different edges as opposed to a
  \emph{broadcast} model.}  Moverover, by using randomization it tries
to keep the entropy of token distribution as high as possible in the
presence of an adversary.  \randdiff is optimal for static networks,
while for dynamic networks under an oblivious adversary, it completes
$k$-gossip in $\tilde{O}(n +k)$ rounds for certain initial token
distributions which take any token-forwarding algorithm
$\tilde\Omega(nk)$ rounds under adaptive
adversaries~\cite{Dutta2013On-the-Complexity}\footnote{Actually,
  \cite{Dutta2013On-the-Complexity} shows the $O(n \ \polylog(n))$
  bound applies even for a weaker protocol called \symdiff, where the token
  exchanged between two neighbouring nodes is a random token from the
  {\em symmetric difference} of the token sets of the two
  nodes.\junk{\randdiff is more powerful than Sym-Diff and hence it can be
  shown that the same upper bound holds.}}.  \junk{More precisely, the
  work of \cite{Dutta2013On-the-Complexity} showed that starting from
  a distribution in which each node has a random constant fraction of
  the tokens, RandDiff completes in $O(n \ \polylog(n))$ rounds}

\subsection{Our Contributions}
We present lower and upper bounds for information spreading under the
oblivious adversary model.

\smallskip
\noindent {\bf Lower Bound for \randdiff.}  We show that \randdiff
requires $\tilde{\Omega}(n^{\frac{3}{2}})$ rounds to complete
$n$-gossip under an oblivious adversary with high
probability\footnote{Throughout, by ``with high probability'' or {\em
    whp}, we mean with probability at least $1 - 1/n^c$, where the
  constant $c$ can be made sufficiently large by adjusting other
  parameters in the analysis.} (Section \ref{sec:rd}).  Our proof
shows that even an oblivious adversary can block \randdiff\ using a
sophisticated strategy that prevents some tokens from reaching certain
areas of the network. Although the adversary is unaware of the
algorithm's random choices, the adversary can exploit the
randomization of the algorithm to act against its own detriment.

\smallskip
\noindent {\bf Lower bound for symmetric knowledge-based algorithms.}
We use the technical machinery developed for the \randdiff\ lower
bound to attack a broad class of randomized $k$-gossip algorithms
called {\em symmetric knowledge-based (SKB) algorithms}, which are a
subclass of the knowledge-based class introduced
in~\cite{Kuhn2010Distributed} (Section \ref{sec:skb}).  In any round,
the token sent by a node in a knowledge-based algorithm is based
entirely on the set of tokens it possesses over time; an SKB algorithm
has the additional constraint that if two tokens first arrived at the
node at the same time, then their transmission probabilities are
identical.  SKB algorithms are quite general in the sense that each
node can use any probabilistic function that may depend on the node's
identity and the current round number to decide which token to send in
a round.  Indeed, this offers an attractive algorithmic feature that
does not exist in \randdiff: exploitation of information on the
history of token arrivals.  We show that this may not help achieve a
near-linear bound: any SKB algorithm for $n$-gossip requires
$\tilde{\Omega}(n^{\frac{4}{3}})$ rounds whp, under a stronger kind of
oblivious adversary, which is also allowed to add tokens from the
universe of $n$ tokens to any node in any round.

We do not know whether either of the above lower bounds is tight.  Our
bounds do raise some intriguing questions: Can $n$-gossip be even
solved in $O(n^{2-\epsilon})$ rounds (for some constant $\epsilon >
0$) rounds by any algorithm?  Are there restricted versions of the
oblivious adversary that are more amenable to distributed algorithms?
We present two upper bound results that partially answer these
questions.

\smallskip
\noindent{{\bf Upper bound for \randdiff\ under restricted oblivious
    adversaries.}} We introduce a new model for dynamic networks which
restricts the oblivious adversary in the extent and location of
dynamics it can introduce (Section \ref{sec:ft}).  In the {\em
  paths-respecting} model, we assume that in each round, the dynamic
network is a subgraph of an an underlying {\em infrastructure graph}
$\IG$; furthermore, for every pair $(s,d)$ of nodes in $\IG$, there
exists a set $N_{sd}$ of simple vertex-disjoint paths from $s$ to $d$
in $\IG$ such that in any round the adversary can remove at most
$N_{sd} - 1$ edges from these paths.  The paths-respecting model is
quite general and of independent interest in modeling and analyzing
protocols for dynamic networks.\footnote{Indeed, an
  infrastructure-based model captures many real-world scenarios
  involving an underlying communication network with dynamics
  restricted to the network edges. This is unlike the case of a
  general oblivious adversary where the graph can change arbitrarily
  from round to round.}  A basic special case of the paths-respecting
model is one where $\IG$ is a $\lambda$-vertex-connected graph and the
adversary fails at most $\lambda-1$ edges in each round.  Even for
this special case, it is not obvious how to design fast distributed
algorithm for $n$-gossip.  In Section~\ref{sec:ft}, we also present
examples in this model where the adversary can remove a constant
fraction of the edges of an infrastructure graph.  We show that
\randdiff completes $n$-gossip in $\tilde{O}(n^{5/3})$ rounds under
the {\em paths-respecting model} (Section \ref{sec:ft}).  From a
technical standpoint, this result is the most difficult one in this
paper; it relies on a novel delay sequence argument, which may offer a
framework for other related routing and information dissemination
algorithms in dynamic networks.

\smallskip
\noindent{{\bf A $\min\{nk, \tilde{O}((n+k)\sqrt{n})\}$ centralized
    algorithm for $k$-gossip.}}  Finally, we present a centralized
algorithm (cf. Section \ref{sec:centralized}) that completes
$k$-gossip in $\min\{nk, \tilde{O}((n+k)\sqrt{n})\}$ rounds (and hence
$n$-gossip in $\tilde{O}(n^{\frac{3}{2}})$ rounds) whp, under an
oblivious adversary.  This answers the main open question
affirmatively, albeit in the {\em centralized} setting.  This result
provides the first {\em sub-quadratic} token dissemination schedule in
a dynamic network controlled by an oblivious adversary.  One of the
key ingredients of our algorithm is a load balancing routine that is
of independent interest: $n$ tokens are at a node, and the goal is to
distribute these tokens among the $n$ nodes, without making any copies
of the tokens. This load balancing routine is implemented in a
centralized manner; its complexity in the distributed setting under an
oblivious adversary, however, is open.  We believe that our
centralized algorithm is a step towards designing a possible
subquadratic-round fully distributed algorithm under an oblivious
adversary.

\onlyShort{ Due to space constraints, we have to omit many proofs; we
  refer the reader to the full paper for all missing proofs~\cite{augustine+alpr:dynamicFull}.}
\section{An $\tilde{\Omega}(n^{1.5})$ lower bound for \randdiff} 
\label{sec:rd}

In this section, we show that there exists an oblivious adversary
under which \randdiff\ takes $\tilde{\Omega}(n^{3/2})$ rounds to complete
$n$-gossip whp.  
We will establish this result in
two stages.  In the first stage, we will introduce a more powerful
class of adversaries, which we refer to as {\em invasive}\/
adversaries.  Like an oblivious adversary, an invasive adversary can
arbitrarily change the graph connecting the nodes in each round,
subject to the constraint that the network is connected.  In addition,
an invasive adversary can add, to each node, an arbitrary set of
tokens from the existing universe of tokens.  Similar to an oblivious
adversary, an invasive adversary needs to specify, for each round,
the network connecting the nodes as well as the tokens to add to each
node, in advance of the execution of the gossip
algorithm.  

In Section~\ref{sec:lower.randdiff.invasive}, we will first show that
there exists an invasive adversary under which \randdiff\ takes
$\tilde{\Omega}(n^{3/2})$ rounds to complete $n$-gossip whp.
In Section~\ref{sec:lower.randdiff.oblivious}, we will simulate the
token addition process using \randdiff\ and extend the lower bound
claim to oblivious adversaries.

\begin{figure}[t]
\centering
\includegraphics[width=\textwidth]{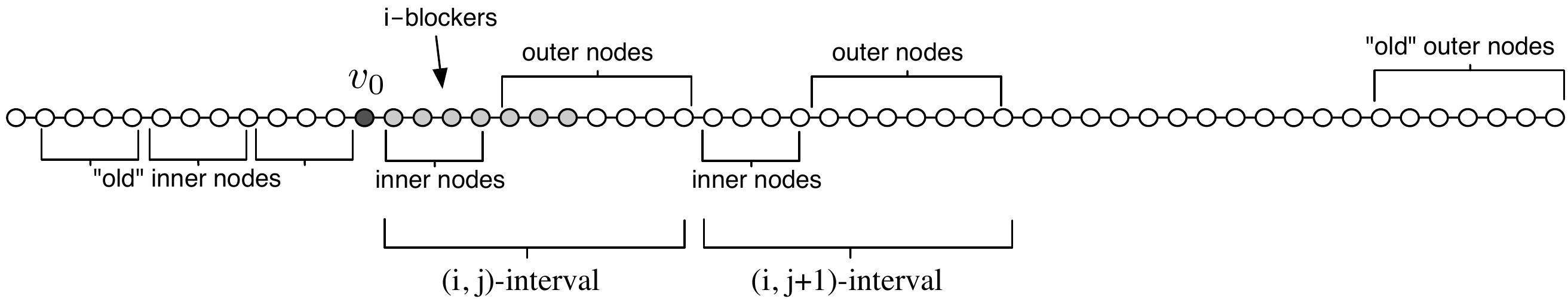}
\caption{The dynamic line network for the lower bound for \randdiff}
\label{fig:line-example}
\end{figure}

\onlyLong{
At a very high level, we use a dedicated set of $m$ tokens, for a suitable
choice of $m$, to block progress of an arbitrary token for a number of
rounds super-linear in $m$.  A judicious repetition of this process,
together with appropriate network dynamics, and a careful setting of
parameters then yields the desired lower bound.
}
\subsection{Lower bound under an invasive adversary}
\label{sec:lower.randdiff.invasive}

Our invasive adversary proceeds in $\sqrt{n}/(2 \log n)$ {\em phases},
each phase consisting of $\Omega(n)$ rounds, divided into {\em
  segments}\/ of $\sqrt{n}$ rounds each.  Throughout the process, the
network is always a line, you can refer to Fig. \ref{fig:line-example}
throughout the description on the network.  We build this line network
by attaching two line networks -- which we refer to as {\em left}\/
and {\em right}\/ lines -- each of which has the same designated
source node $v_0$ at one of its ends.  The size of the left line keeps
growing with time, while the size of the right line shrinks with time.
After the end of each segment, we move $\log n$ nodes closest to $v_0$
in the right line to the left line, so the size of the left line at
the start of segment $j$ of phase $i$ is exactly $((i-1)\frac{\sqrt{n}}{3} + j-1)\log n$.

At the start of each phase, we label the nodes in the right line
(other than the source $v_0$) as $v_1$ through $v_{p}$ (where $p$ is
the number of nodes in the right line at that time).  For any $j$, we
refer to set $\{v_l: 2(j-1)\sqrt{n} \le l < 2j \sqrt{n}\}$ as the
$(i,j)$-interval.  We refer to the first $\log{n}$ nodes of the
$(i,j)$-interval as the $(i,j)$-inner nodes, and the remaining
$2\sqrt{n}-\log{n}$ nodes as the $(i,j)$-outer nodes.

Initially, $v_0$ has all of the $n$ tokens and every other node has no
token.  We arbitrarily partition the \emph{tokens} into $\sqrt{n}$
groups of $\sqrt{n}$ tokens each.  We use $B_i$ to denote the $i$th
group, and refer to any token in $B_i$, $1 \le i \le \sqrt{n}/(2 \log
n)$ as an {\em $i$-blocker} since the adversary will use the tokens in
$B_i$ in phase $i$ to impede the progress of tokens not in $\cup_{j
  \le i} B_i$.  Let $M(u)$ denote the set of tokens in node $u$ at any
time.

At a high level, our adversary operates as follows.  Throughout
segment $j$ of phase $i$, the adversary keeps the line unchanged.  At
the start of segment $j$, the adversary adds randomly chosen subsets
of tokens from $B_i$ to the $\sqrt{n}$ nodes of $(i,j)$-interval which
are the $\sqrt{n}$ consecutive nodes adjacent to $v_0$ from the right.
We argue that this action ensures that in subsequent $\varepsilon
\sqrt{n}$ rounds, no token outside the set $\cup_{i'\le i}B_{i'}$
makes it to an $(i,j)$-outer node.  Since in each phase the adversary
uses the same set of $\sqrt{n}$ tokens, namely $B_i$, as ``blockers'',
it can continue this for $\Omega(\sqrt{n}/\log{n})$ phases, and ensure
that whp, no token in, say $B_{\sqrt{n}}$, has
reached the right line in $\Omega(n^{3/2}/\log n)$ rounds. We now formally describe how our adversary operates.

%
%

\smallskip
\noindent \textbf{Phase $i$, $1\leq i \leq \sqrt{n}/(2 \log n)$}:
\begin{mylowitemize}
\item
{\bf Segment $j$, $1 \le j \leq \sqrt{n}/3$}: The network is a line, that
has two parts.  The first part is the left line with $v_0$ at one end,
connected to all the $(i',j')$-inner nodes, where either \emph{A}: $i' < i$ or \emph{B}:
$i' = i$ and $j' < j$.  The second part is a line with $v_0$ at one
end connected to $(i,j')$-intervals in sequence, $j' \ge j$, followed
by $(i,j')$-outer nodes, $j' < j$.

\begin{compactitem}
\item 
{\bf Pre-Segment Insertion:} For each token $\tau$ in $B_i$ and each
node $v$ among the first $\sqrt{n}$ nodes of $(i,j)$-interval nodes:
adversary inserts $\tau$ in $v$ independently with probability $1/2$.

\item
{\bf Run:} Execute \randdiff\ for $\varepsilon \sqrt{n}$ rounds of
segment $j$.

\item
{\bf Post-Segment Shifting:} The adversary moves the $(i,j)$-inner nodes to the left line,
and the $(i,j)$-outer nodes to the right end of the line and connect the $(i,j+1)$-interval to $v_0$.
\end{compactitem}

\item
{\bf Post-Phase Insertion:} For every node in the right line, the
adversary inserts any token missing from $B_i$.
\end{mylowitemize}


\junk{
\begin{lemma}
\label{lem:knowledge.phase}
At the start of phase $i$, every token in any node in the right line
is from $\cup_{i' < i} B_{i'}$
\end{lemma}
}


\begin{lemma}
\label{lem_diff_expected}
In every round of phase $i$ and segment $j$, for any of two adjacent nodes $u$ and
$v$ on the $(i,j)$-inner nodes,
the probability that $|M(u) - M(v)|$ is less than $\sqrt{n}/16$ is at
most $e^{-\Omega(\sqrt{n})}$.
\end{lemma}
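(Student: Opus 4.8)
The plan is to split the claim into an \emph{initial estimate}, showing that immediately after the Pre-Segment Insertion of segment $j$ the quantity $|M(u)-M(v)|$ is at least $\sqrt{n}/8$ except with probability $e^{-\Omega(\sqrt{n})}$, and a (deterministic) \emph{persistence} argument, showing that over the $\varepsilon\sqrt{n}$ rounds of the segment this quantity can never drop below $\sqrt{n}/16$. Throughout I would restrict attention to the tokens of $B_i$, since those are the only tokens whose presence at $u$ and $v$ is governed by the fresh independent coin flips of this segment; any other token is fixed by the history and can only help.

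For the initial estimate I would condition on the entire history up to the instant just before the Pre-Segment Insertion of segment $j$, which fixes, for each $\tau\in B_i$, whether $\tau$ currently lies in $u$ and in $v$. The adversary then inserts each $\tau$ into $u$ and into $v$ using two independent fair coins. Using the invariant (discussed below) that at this instant $v$ holds no token of $B_i$, every $\tau\in B_i$ lands in $M(u)\setminus M(v)$ with probability at least $1/4$ (exactly $1/4$ if $\tau$ is not already at $u$, and $1/2$ if it is), and these $\sqrt{n}$ events are independent across $\tau$. Hence the conditional expectation of $|M(u)-M(v)|$, counting only $B_i$ tokens, is at least $\sqrt{n}/4$, and a Chernoff bound over the $\sqrt{n}$ independent indicators gives $|M(u)-M(v)|\ge\sqrt{n}/8$ right after insertion, except with probability $e^{-\Omega(\sqrt{n})}$.

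For the persistence part, the key observation is that $M(u)$ only grows during the Run, so $M(u)-M(v)$ can lose an element only when $v$ \emph{receives} a token that currently lies in $M(u)-M(v)$; tokens newly acquired by $u$ can only increase the difference. Since the network is a line, $v$ has degree at most two, so under \randdiff\ it receives at most two tokens per round, hence at most $2\varepsilon\sqrt{n}$ tokens over the whole segment, each decreasing $|M(u)-M(v)|$ by at most one. Choosing the constant $\varepsilon\le 1/32$, the total decrease is at most $\sqrt{n}/16$, so in \emph{every} round of the segment $|M(u)-M(v)|\ge \sqrt{n}/8-\sqrt{n}/16=\sqrt{n}/16$. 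As this degradation bound is deterministic given the initial estimate, the event that $|M(u)-M(v)|<\sqrt{n}/16$ in some round is contained in the Chernoff-failure event above, yielding the claimed $e^{-\Omega(\sqrt{n})}$ bound (a union bound over the $O(\log n)$ inner pairs and the $\varepsilon\sqrt{n}$ rounds does not affect the exponent).

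I expect the main obstacle to be the invariant used in the initial estimate: that just before the Pre-Segment Insertion of segment $j$ the inner node $v$ holds no token of $B_i$. This requires a propagation bound showing that $B_i$ tokens inserted in earlier segments $j'<j$ cannot have reached the fresh $(i,j)$-inner nodes. Two facts should drive this: a token's support can advance at most one hop per round under \randdiff, while at the time of segment $j'$ the $(i,j)$-interval sits at line-distance $\Omega\big((j-j')\sqrt{n}\big)$ from $v_0$, and the Post-Segment Shifting moves the $B_i$-bearing inner/outer nodes of $(i,j')$ \emph{away} from $v_0$ rather than toward the fresh inner nodes; since only $O\big((j-j')\varepsilon\sqrt{n}\big)$ rounds elapse and $\varepsilon$ is small, the tokens cannot cover the distance. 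Making this geometric argument precise against the shifting line, and confirming the exact per-round token influx for the specific \randdiff\ rule, are the delicate steps; the two probabilistic ingredients (Chernoff concentration and degree-based persistence) are then routine.
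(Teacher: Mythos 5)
Your proposal is correct and follows essentially the same route as the paper's proof: both sum the indicators $I_\tau$ over $\tau \in B_i$ to get expectation $\sqrt{n}/4$ right after the pre-segment insertion, apply a Chernoff bound to get a difference of at least $\sqrt{n}/8$ with failure probability $e^{-\Omega(\sqrt{n})}$, and then observe that $v$, having degree at most two on the line, receives at most $2\varepsilon\sqrt{n}$ tokens during the segment, so the difference stays above $\sqrt{n}/16$ once $\varepsilon \le 1/32$. The only substantive difference is that you make explicit (and correctly flag as needing a propagation argument) the invariant that the $(i,j)$-inner nodes hold no $B_i$ tokens before the insertion, which is what licenses $E[I_\tau] \ge 1/4$ and the independence across $\tau$; the paper's proof uses this implicitly without stating it.
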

\begin{proof}
Let $X$ be the random variable denoting the number of tokens node $u$
has but node $v$ does not have, at the start of segment $j$.  Clearly,
$X$ equals $\sum_{\tau \in B_i} I_{\tau}$, where $I_{\tau}$ is the
indicator variable for token $\tau$; $I_{\tau}$ is 1 if $u$ has token
$\tau$ and $v$ does not have $\tau$; otherwise it is 0.  Using
linearity of expectation, we obtain $E[X] = \sum_{\tau \in B_{i}}
E[I_{\tau}]$. Since the adversary adds each token to each node with
probability of 1/2 independently, we have $E[I_{\tau}] = 1/4$ and
$E[X] = \sqrt{n}/4$.  Using a standard Chernoff bound argument, we
obtain that the probability that $X \leq \sqrt{n}/8$ is
$e^{-\Omega(\sqrt{n})}$.  During the remainder of segment $j$, since
each node has two neighbors on the line, node $v$ may receive at most
$2\varepsilon \sqrt{n}$ new tokens. Thus, $|M(u) - M(v)|$ is at least
$\sqrt{n}/8 - 2 \varepsilon \sqrt{n}$ whp (for $\varepsilon
\le 1/{32}$, this difference is at least $\sqrt{n}/16$).
\end{proof}

\begin{lemma} 
\label{lem_not_beyond_logn}
In segment $j$ of phase $i$, the probability that any token in
$\cup_{i' > i} B_{i'}$ reaches an $(i,j)$-outer node is at most
$1/n^{9}$.
\end{lemma}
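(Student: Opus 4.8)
The plan is to track, for each ``future'' token $\tau\in\cup_{i'>i}B_{i'}$, its \emph{front}: the node farthest from $v_0$ along the right line that currently holds $\tau$, with $v_0$ (which always holds every token) regarded as front position $0$. At the start of segment $j$ the $(i,j)$-interval has just been attached to $v_0$ and, by the design of the process together with an inductive application of this very claim to the earlier segments, contains no token of $\cup_{i'>i}B_{i'}$. Since the right line is joined to the rest of the network only through $v_0$, the only way for $\tau$ to reach an $(i,j)$-outer node is for its front to diffuse rightward from $v_0$, crossing in turn the $\log n$ inner nodes $w_1,\dots,w_{\log n}$ of the $(i,j)$-interval. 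Under the synchronous semantics of \randdiff\ the front advances by at most one position per round: when it sits at $w_k$, the only neighbour of $w_{k+1}$ holding $\tau$ is $w_k$ itself, so $w_{k+1}$ can acquire $\tau$ only from $w_k$ and no node beyond $w_{k+1}$ can acquire it that round. Hence for $\tau$ to touch an outer node its front must successfully cross all $\log n-1$ inner--inner edges $(w_1,w_2),\dots,(w_{\log n-1},w_{\log n})$ within the $\varepsilon\sqrt n$ rounds of the segment.

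Next I would bound a single advance. When the front is at $w_k$, advancing requires edge $(w_k,w_{k+1})$ to transmit $\tau$; since the token crossing an edge under \randdiff\ is chosen uniformly from those the sender holds and the receiver lacks, this probability is at most $1/|M(w_k)-M(w_{k+1})|$. By Lemma~\ref{lem_diff_expected}, $|M(w_k)-M(w_{k+1})|\ge\sqrt n/16$ except on an event of probability $e^{-\Omega(\sqrt n)}$, so conditioned on the start-of-round configuration the advance probability is at most $16/\sqrt n$. Letting $\mathcal G$ be the event that every inner--inner set difference is at least $\sqrt n/16$ in every round of the segment, a union bound over the $O(\log n)$ edges and $\varepsilon\sqrt n$ rounds gives $\Pr[\overline{\mathcal G}]=e^{-\Omega(\sqrt n)}$.

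I would then run the concentration argument. Let $A_t$ indicate that in round $t$ the front of $\tau$ crosses an inner--inner edge, and set $A_t'=A_t\cdot\mathbf 1[\text{all inner--inner differences are }\ge\sqrt n/16\text{ at the start of round }t]$. The truncating indicator is measurable with respect to the history $\mathcal F_{t-1}$, so $\mathbb E[A_t'\mid\mathcal F_{t-1}]\le 16/\sqrt n$, whence $\sum_t A_t'$ is stochastically dominated by $\mathrm{Bin}(\varepsilon\sqrt n,16/\sqrt n)$. On $\mathcal G$ we have $A_t'=A_t$, and reaching an outer node forces $\sum_t A_t\ge\log n-1$, so $\Pr[\tau\text{ reaches an outer node}]\le\Pr[\mathrm{Bin}(\varepsilon\sqrt n,16/\sqrt n)\ge\log n-1]+\Pr[\overline{\mathcal G}]$. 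A standard binomial tail bound gives $\Pr[\mathrm{Bin}(\varepsilon\sqrt n,16/\sqrt n)\ge\log n-1]\le(16e\varepsilon/\log n)^{\log n-1}=n^{-\omega(1)}$, which is below $n^{-10}$ for large $n$; a union bound over the at most $n$ tokens of $\cup_{i'>i}B_{i'}$ then yields the claimed $1/n^9$.

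The mechanism is clean, but I expect the main obstacle to be making the stochastic domination rigorous: the advance probabilities are only bounded \emph{conditionally} and only on the good event $\mathcal G$, so the truncation $A_t'$ (or an equivalent optional-stopping/coupling device) is exactly what lets one pass from a per-round conditional bound to an unconditional binomial tail. The second delicate point is the inductive input that the $(i,j)$-interval carries no future token at the start of the segment; establishing this requires threading the per-segment claim across segments, using that future tokens never advance past position $2\sqrt n$ in any earlier segment so that the front genuinely originates at $v_0$.
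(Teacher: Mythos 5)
Your proposal is correct and takes essentially the same route as the paper: the paper likewise uses Lemma~\ref{lem_diff_expected} to bound the per-round advance probability by $16/\sqrt{n}$, bounds the chance of advancing more than $\log n$ steps in $\varepsilon\sqrt{n}$ rounds by ${\varepsilon\sqrt{n} \choose \log n}\left(\frac{16}{\sqrt{n}}\right)^{\log n} = O(1/n^{10})$ (the same binomial tail you compute), and finishes with a union bound over the at most $n$ tokens. Your front-tracking, the good event $\mathcal{G}$, and the truncation/stochastic-domination device simply make rigorous the conditioning step that the paper performs informally.
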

\begin{proof}
Let $\alpha$ be an arbitrary token in the set $\cup_{i' > i} B_{i'}$.
By Lemma~\ref{lem_diff_expected}, the probability that at an arbitrary
round token $\alpha$ is sent from one node to its adjacent node on
$(i,j)$-interval is at most $16/\sqrt{n}$. The probability
that token $\alpha$ goes further than $\log{n}$ steps during segment
$j$ (which is $\varepsilon \sqrt{n}$ rounds) is at most ${\varepsilon
  \sqrt{n} \choose \log{n}} {(\frac{16}{
    \sqrt{n}})}^{\log{n}}$, which is $O(1/n^{10})$.  Now using union
bound, we obtain that the probability that any token in $\cup_{i' > i}
B_{i'}$ reaches any i-outer node is at most $n/n^{10} = 1/n^9$.
 \end{proof}

\begin{lemma} 
\label{lem:rightline.phase}
At the end of phase $i$, the set of tokens in any node $\neq v_0$ in the right
line is $\cup_{i' \le i} B_{i'}$ whp.
\end{lemma}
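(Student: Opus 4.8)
The plan is to prove the claim by induction on the phase index $i$, establishing two containments for every node $u \neq v_0$ in the right line at the end of phase $i$: that $M(u) \supseteq \cup_{i' \le i} B_{i'}$ and that $M(u) \subseteq \cup_{i' \le i} B_{i'}$. The base case is the initial configuration (the ``end'' of phase $0$), where only $v_0$ holds tokens and hence every other right-line node holds $\emptyset = \cup_{i' \le 0} B_{i'}$. For the inductive step I would assume that at the end of phase $i-1$ (equivalently, the start of phase $i$) every right-line node holds exactly $\cup_{i' < i} B_{i'}$ whp.

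For the first containment (the ``lower bound'' on token sets) I would first observe that the set of nodes present in the right line at the end of phase $i$ is a subset of those present at its start, since during a phase nodes only ever migrate from the right line to the left line and never in the reverse direction. Hence, by the inductive hypothesis, each such node begins phase $i$ holding all of $\cup_{i' < i} B_{i'}$. Since neither \randdiff\ nor the adversary ever deletes a token, and since the Post-Segment Shifting merely reconnects nodes without disturbing their token sets, every such node retains $\cup_{i' < i} B_{i'}$ throughout. Finally, the Post-Phase Insertion step explicitly deposits every missing token of $B_i$ into every right-line node, so at the end of phase $i$ each node holds all of $\cup_{i' \le i} B_{i'}$.

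For the second containment (no ``higher'' token of $\cup_{i' > i} B_{i'}$ ever reaches a surviving right-line node) I would argue that the only source of such tokens is $v_0$: by the inductive hypothesis no other right-line node starts the phase with one, and the pre-segment insertions introduce only tokens of $B_i$. During segment $j$ the freshly seeded $(i,j)$-interval sits immediately adjacent to $v_0$ and acts as a barrier. The key structural observation is that every node that ultimately survives in the right line --- namely every $(i,j)$-outer node together with every as-yet-unprocessed interval node --- lies, in every segment, strictly beyond the current barrier interval in the line ordering away from $v_0$. Thus Lemma~\ref{lem_not_beyond_logn}, applied to segment $j$, guarantees that whp no higher token crosses the $(i,j)$-inner barrier, and in particular none reaches any outer node or any future-interval node during that segment; the outer nodes, once shifted to the far right end, remain behind all subsequent barriers and so stay protected for the rest of the phase. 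Taking a union bound over the $O(n/\log n)$ segments across phases $1,\dots,i$, each failing with probability at most $1/n^9$ by Lemma~\ref{lem_not_beyond_logn}, the total failure probability is $O(1/n^8)$, so whp no surviving right-line node acquires a higher token.

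I expect the main obstacle to be the second containment, and specifically making precise the claim that protection \emph{persists}: Lemma~\ref{lem_not_beyond_logn} directly bounds only the within-segment leakage of higher tokens past the inner nodes, so I would need to combine it with the invariant that every surviving node is, throughout phase $i$, separated from $v_0$ by an active barrier interval, and to verify that the Post-Segment Shifting never places a surviving node closer to $v_0$ than the current barrier. Combining the two containments yields $M(u) = \cup_{i' \le i} B_{i'}$ whp, completing the induction.
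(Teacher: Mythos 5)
Your proposal is correct and follows essentially the same route as the paper's proof: induction on the phase with base case $i=0$ and $B_0=\emptyset$, the observation that right-line nodes only migrate leftward (so the inductive hypothesis plus the Post-Phase Insertion gives the containment $\supseteq$), and Lemma~\ref{lem_not_beyond_logn} to rule out any token of $\cup_{i'>i}B_{i'}$ reaching a surviving node. The only difference is presentational: where the paper argues by contradiction via the ``first node in $R_i$ to receive such a token,'' you make the barrier invariant and the union bound over all segments explicit --- which is, if anything, slightly more careful, since it also covers the as-yet-unprocessed interval nodes that the paper's phrase ``$R_i$ is the union of the $(i,j)$-outer nodes'' glosses over.
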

\begin{proof}
The proof is by induction on $i$.  For convenience, we set the
induction base case to be $i = 0$ and assume $B_0$ is the empty set.
So the base case, at the start of the algorithm, is trivial since
initally every node other than $v_0$ has no tokens.  For the induction
step, we consider phase $i > 0$.  Let $R_i$ denote the set of nodes in
the right line at the end of phase $i$.  We first observe that $R_i
\subseteq R_{i-1}$.  By the induction hypothesis, it follows that the
token set at every node in $R_i$ at the end of phase $i-1$ is
precisely $\cup_{i' < i} B_{i'}$.  Furthermore, the adversary
guarantees that every node in $R_i$ has all tokens from $B_i$ at the
end of phase $i$.

It remains to prove that no token from $\cup_{i' > i} B_{i'}$ arrives
at any node in $R_i$ during phase $i$.  Our proof is by contradiction.
Let $v$ be the first node in $R_i$ to receive a token $\tau$ from
$\cup_{i' > i} B_{i'}$ in phase $i$.  Since $v$ is first such node, it
received $\tau$ from $v_0$ or from an $(i,j)$-inner node since $R_i$
is the union of the sets of all $(i,j)$-outer nodes.  Now, $v$ can be
connected to such an $(i,j)$-inner node only during segment $j$.  By
Lemma~\ref{lem_not_beyond_logn}, however, no $(i,j)$-outer node receives a
token from $\cup_{i' > i} B_{i'}$ whp.
\end{proof}

\begin{theorem}
Under the invasive adversary defined above, whp,
\randdiff\ requires $\Omega(n^{3/2}/\log n)$ rounds to complete
$n$-gossip.
\end{theorem}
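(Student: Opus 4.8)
The plan is to combine the structural guarantee of Lemma~\ref{lem:rightline.phase} with a simple accounting of the total number of rounds. First I would count the rounds: the adversary runs $i^* := \sqrt{n}/(2\log n)$ phases, each phase consists of $\sqrt{n}/3$ segments, and each segment executes \randdiff\ for $\varepsilon\sqrt{n}$ rounds. Multiplying these together, the total number of rounds the adversary drives is
\[
i^* \cdot \frac{\sqrt{n}}{3} \cdot \varepsilon\sqrt{n} \;=\; \frac{\sqrt{n}}{2\log n}\cdot\frac{\varepsilon n}{3} \;=\; \Omega\!\left(\frac{n^{3/2}}{\log n}\right).
\]
So it suffices to show that $n$-gossip has not completed by the end of the final phase $i^*$.

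Next I would apply Lemma~\ref{lem:rightline.phase} with $i = i^*$. This yields that, whp, every node other than $v_0$ remaining in the right line holds exactly the token set $\cup_{i' \le i^*} B_{i'}$. Since each group $B_{i'}$ contains $\sqrt{n}$ tokens and the groups are disjoint, this is a set of only $i^*\sqrt{n} = n/(2\log n)$ tokens, out of the $n$ tokens in the universe. Hence each such node is missing $n - n/(2\log n) = \Omega(n)$ tokens. I note that the high-probability guarantee is maintained because the induction underlying Lemma~\ref{lem:rightline.phase} already absorbs a union bound over all $O(n/\log n)$ segments (each contributing failure probability at most $1/n^9$ via Lemma~\ref{lem_not_beyond_logn}), so the overall failure probability is still $o(1)$.

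Finally, I would confirm that the right line is actually nonempty at this point, so that there genuinely exist starved nodes. Using the stated size formula, the left line at the start of segment $j$ of phase $i$ has $((i-1)\frac{\sqrt{n}}{3}+j-1)\log n$ nodes; evaluating at the end of phase $i^*$ gives a left line of size at most roughly $i^*\cdot\frac{\sqrt{n}}{3}\cdot\log n = n/6$. Thus at least $n - n/6 = \Omega(n)$ nodes remain in the right line, and by the previous paragraph each of them is missing $\Omega(n)$ tokens whp. Therefore $n$-gossip is not complete after $\Omega(n^{3/2}/\log n)$ rounds, as claimed.

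I expect no single step to be a serious obstacle: the heavy lifting has already been done in Lemmas~\ref{lem_diff_expected}--\ref{lem:rightline.phase}, and what remains is bookkeeping. The one point demanding care is the union bound over phases and segments needed to promote the per-segment high-probability statements to a single high-probability conclusion for the whole execution, together with the easy-to-overlook check that the growing left line never swallows the entire network.
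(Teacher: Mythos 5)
Your proposal is correct and follows essentially the same route as the paper's proof: multiply phases $\times$ segments $\times$ rounds to get $\Omega(n^{3/2}/\log n)$, invoke Lemma~\ref{lem:rightline.phase} to show right-line nodes are missing tokens whp, and verify the left line has absorbed only $O(n)$ (you compute $n/6$, the paper bounds it by $n/2$) so the right line retains $\Omega(n)$ nodes. Your added bookkeeping -- quantifying that each right-line node misses $\Omega(n)$ tokens and making the union bound over segments explicit -- is a harmless strengthening of what the paper leaves implicit.
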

\begin{proof}
Each phase consists of $\sqrt{n}/3$ segments, with each segment having
$\eps \sqrt{n}$ rounds.  So the total number of rounds after
$\sqrt{n}/(2 \log n)$ phases is $\Omega(n^{3/2}/\log n)$.  We obtain that after
$\sqrt{n}/(2 \log n)$ phases, the size of the left line is at most
$n/2$, implying that the right line has $\Omega(n)$ nodes.  By
Lemma~\ref{lem:rightline.phase}, whp, every node in
the right line is missing at least one token, completing the proof of
the theorem.
\end{proof}

\junk{Moreover, no token unless tokens of
first group goes beyond the distance of $\log{n}$ from node $s$. Note
that during each phase, the set of tokens the adversary adds to some
nodes of paths are the same.

At the end of this phase, the adversary adds the blocking tokens of
phase $i$ to every node in the network, so we can simply ignore them
because these set of tokens are no more sent and received by any node
on the network.}

\subsection{Lower bound under an oblivious adversary}
\label{sec:lower.randdiff.oblivious}
In this section, we extend the lower bound established in
Section~\ref{sec:lower.randdiff.invasive} to oblivious adversaries.
Thus, the adversary can no longer insert tokens into the network
nodes; the pre-segment insertion and post-phase insertion steps of the
adversary of Section~\ref{sec:lower.randdiff.invasive} are no longer
permitted.  We simulate these two steps using \randdiff\ and a
judicious use of (oblivious) network dynamics.  

\onlyLong{
We now describe how to
implement the token insertion process for \randdiff\ with an oblivious
adversary.

\noindent
{\noindent {\textbf{Pre-Segment Insertion:}}} The pre-segment
insertion step occurs at the beginning of every segment $j$ of every
phase $i$.  Our implementation varies depending on whether $j$ is $1$
or greater than $1$.  Let $X_{i,j}$ denote the set of $n^{1/2}$ nodes
in the $(i,j)$-interval that are nearest to $v_0$.

\begin{mylowitemize}
\item
{\bf Token insertion for first segment of phase $i$:} To implement
token insertion, we add two new rounds to the first segment.  In the
first round, the adversary adds an edge from $v_0$ to each node in
$X_{i,1}$ (we refer to these as {\em direct}\/ edges).  For the second
round of the phase, the adversary removes the direct edges added above
(except the one that connects $v_{0}$ to its right neighbor), adds
back the edge from $v_0$ to its neighbor in the right line, and adds
an edge between any two nodes in $X_{i,1}$, independently with
probability $1/2$.  The remainder of the network is unchanged from the
previous round.  The remainder of the first segment follows exactly
the process of the invasive adversary, as in
Section~\ref{sec:lower.randdiff.invasive}.  Consistent with our
notation for the invasive adversary, we define $B_i$ to be the set of
tokens that were transferred from $v_0$ to $X_{i,1}$ in the first
round of phase $i$ (the $i$-blockers).

\item
{\bf Token insertion for remaining segments of phase $i$:} Between two
consecutive segments $j$ and $j+ 1$, $j \ge 1$, of a phase $i$, we
need a mechanism to transfer the set $B_i$ of tokens introduced into
the nodes in the set $X_{i,j}$ to the nodes in $X_{i,j+1}$.  The
oblivious adversary achieves this in two steps.  First, in addition to
the line network, it forms a clique for $\Theta(\log n)$ rounds among all the outer nodes 
in $X_{i,j}$.  Then, for one round, the adversary adds a
biclique between all the outer nodes in $X_{i,j}$ and all nodes in $X_{i,j+1}$.
\end{mylowitemize}
\noindent
} 
\onlyLong{
We now show that the token distributions inserted by the invasive
adversary of Section~\ref{sec:lower.randdiff.invasive} are achieved by
the actions of the above oblivious adversary.  We begin by showing
in Lemmas~\ref{lem:oblivious.pre-segment.first.first}
through~\ref{lem:oblivious.pre-segment.first.second} that the pre-segment
insertion step for the first segment of each phase is faithfully
implemented.
}
\junk{ 
We now show that the token distributions inserted by the invasive
adversary of Section~\ref{sec:lower.randdiff.invasive} are achieved by
the actions of the above oblivious adversary.  We show
in Lemmas~\ref{lem:oblivious.pre-segment.first.first}
through~\ref{lem:oblivious.pre-segment.rest} that the pre-segment
insertion steps for all the segments are faithfully
implemented.
}

\junk{
So during this one round, each
of those $n^{1/3}$ nodes has exactly one token (Remember that we
ignore the $m_i$ tokens each node on the network has).
}

\onlyLong{
\begin{lemma} 
\label{lem:oblivious.pre-segment.first.first}
The number of tokens in $B_i$ is $n^{1/2}(1 - o(1))$ whp.
\end{lemma}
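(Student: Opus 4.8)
The plan is to read the definition of $B_i$ as the outcome of a balls-into-bins experiment and then to apply a concentration bound. Recall that $B_i$ is exactly the set of tokens that \randdiff\ delivers from $v_0$ across the $|X_{i,1}| = n^{1/2}$ direct edges during the first round of phase $i$. Under \randdiff, the token delivered to a node $v \in X_{i,1}$ along its direct edge to $v_0$ is a uniformly random token that $v_0$ holds but $v$ does not, and these $n^{1/2}$ choices are independent across the edges. By Lemma~\ref{lem:rightline.phase} applied to phase $i-1$, every node $v \in X_{i,1}$ carries precisely the token set $\cup_{i'<i}B_{i'}$ at the start of phase $i$; hence each delivery is a uniform draw from a single common universe $U = [n]\setminus\bigl(\cup_{i'<i}B_{i'}\bigr)$, and $B_i$ is the set of distinct values obtained from $n^{1/2}$ i.i.d.\ uniform samples from $U$.

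First I would lower-bound the size of this universe. Each group contributes at most $n^{1/2}$ tokens and there are fewer than $i \le n^{1/2}/(2\log n)$ groups indexed below $i$, so $|\cup_{i'<i}B_{i'}| \le (i-1)n^{1/2} \le n/(2\log n) = o(n)$, whence $|U| \ge n(1-o(1)) \ge n/2$. Next I would estimate the expected number of distinct samples: writing $D=|B_i|$, the quantity $n^{1/2}-D$ is at most the number of colliding sample-pairs, whose expectation is $\binom{n^{1/2}}{2}\big/|U| \le (n/2)/(n/2) = O(1)$. Thus $E[|B_i|] \ge n^{1/2}-O(1) = n^{1/2}(1-o(1))$, and trivially $|B_i|\le n^{1/2}$.

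It then remains to establish concentration, for which I would invoke the bounded-differences (McDiarmid) inequality. Since $|B_i|$ is a function of the $n^{1/2}$ independent per-edge token choices and changing any single choice alters the number of distinct tokens by at most one, we have $\Pr\bigl[\,|B_i| \le E[|B_i|] - t\,\bigr] \le \exp(-2t^2/n^{1/2})$. Taking $t = n^{1/2}/\log n = o(n^{1/2})$ makes the right-hand side $\exp(-2n^{1/2}/\log^2 n)$, which is super-polynomially small; hence whp $|B_i| \ge E[|B_i|]-t = n^{1/2}(1-o(1))$, and combined with the trivial upper bound this yields the claim.

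The main obstacle is the modeling step of the first paragraph: one must argue carefully that \randdiff\ on the star configuration of the first round really does deliver one nearly-uniform, independent token to each node of $X_{i,1}$, so that the balls-into-bins abstraction is faithful. In particular I would need to confirm that all nodes of $X_{i,1}$ share the identical token set $\cup_{i'<i}B_{i'}$ at the start of the phase (so the draws are i.i.d.\ from a single universe $U$) and that each delivered token is conditionally uniform on $U$; once this is in place, the universe-size bound, the expectation computation, and the McDiarmid step are all routine.
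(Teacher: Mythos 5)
Your proposal is correct and takes essentially the same route as the paper's proof: both reduce the first round to $n^{1/2}$ independent uniform draws from the universe $M_i = [n]\setminus\cup_{i'<i}B_{i'}$ of size at least $n/2$ (relying, as you correctly flag, on the invariant that every node of $X_{i,1}$ holds exactly $\cup_{i'<i}B_{i'}$ whp at the start of phase $i$), then bound the expected size of $B_i$ and conclude via the method of bounded differences over the per-edge choices. The only cosmetic difference is in the expectation step — the paper computes $E[|B_i|] = |M_i|\bigl(1-(1-1/|M_i|)^{n^{1/2}}\bigr)$ exactly via per-token indicators, whereas you bound the deficit $n^{1/2}-|B_i|$ by the expected number of colliding pairs $\binom{n^{1/2}}{2}/|M_i| = O(1)$ — and your McDiarmid step with explicit Lipschitz constant $1$ and $t = n^{1/2}/\log n$ is in fact a more detailed rendering of the concentration argument the paper only sketches.
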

\begin{proof}
Let $M_i$ denote the complement of set $\cup_{i' < i} B_{i'}$.  For any token
$\tau$ in $M_i$, let $I_{\tau}$ be an indicator random variable that
is 1 if token $\tau$ is sent to a node in $X_{i,1}$ through any of the
$n^{1/2}$ links from $v_{0}$, and $0$ otherwise.  We thus have $E[I_\tau]
= 1 - (1-\frac{1}{|M_i|})^{n^{1/2}}$.  Then, the expected size of
$B_i$, that is, the expected number of different tokens that the nodes
in $X_i$ together collect in the first round of phase $i$ is
\[
|M_i| \left( 1 - \left(1-\frac{1}{|M_i|}\right)^{n^{1/2}}\right) = n^{1/2}(1 - o(1)),
\]
since $|M_i|$ exceeds $n/2$ for every phase.

Though the $I_\tau$'s, for different tokens $\tau$, are not
independent of one another, a standard application of the method of
bounded differences and Azuma's inequality yields that the size of
$B_i$ is $n^{1/2}(1 - o(1))$ whp.  \junk{ $X =
  \sum_{i=1}^{n-m_j}{I_{i}}$ and $E[X] = \sum_{i=1}^{n-m_j}
      {E[I_{i}]}$. The probability of $I_{i} = 1$ is for any i. Thus,
      we have the following equation for $E[X]$:
\begin{eqnarray*} \nonumber
 E[X] = 
 \\ \leq  (n-m_j)(\frac{n^{1/3}}{n-m_j}) = n^{1/3}
\end{eqnarray*}
}
\end{proof}
} 

\junk{
So at the end of this one round, each of $n^{1/3}$ of nodes on first path has exactly one token, and the set of union of their tokens has $n^{1/3}$ tokens whp. Let's call these tokens blocking tokens of phase $i$. 
\item{Step 2: distributing blocking tokens}\\

For the next round, adversary first removes edges from those $n^{1/3}$ nodes on first path to $s$, instead it takes the first closest $n^{1/3} + \log{n}$ non-source nodes on first path, and adds a link between each pair of those nodes with probability of 1/2. Since the network should be connected, it adds a link from a source node to a non-source node on first path. 
 Then, it keeps the rest of network unchanged from previous round. We can show lemma \ref{lem_diff_expected_azuma} holds at the end of this round. 
}

\onlyLong{
\begin{lemma}
\label{lem:oblivious.pre-segment.first.second}
Suppose $|B_i|$ is $n^{1/2}(1 - o(1))$.  Whp over
the random choices in the first round, for any token $\tau$ in $B_i$,
and any node $v$ in $X_{i,1}$, $\tau$ is in $v$ at the end of round with
probability at least $1/2$ and at most some constant $p < 1$,
independent of every other token in $B_i$.
\end{lemma}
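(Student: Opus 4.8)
The plan is to track the token configuration across the two inserted rounds and reduce the presence of each blocker token at a node to a single edge event in the random graph built during the second round. First I would record the effect of round one: along each direct edge $(v_0,u)$, node $u$ receives one uniformly random token of $M_i=\overline{\cup_{i'<i}B_{i'}}$, and along the line edges nothing is exchanged because all nodes of $X_{i,1}$ start round one with identical token sets. Thus at the end of round one every $u\in X_{i,1}$ holds exactly one token $\tau_u\in B_i$ on top of the tokens of earlier phases, which are common to all of $X_{i,1}$ and hence cancel in every pairwise difference. Writing $O(\tau)=\{u\in X_{i,1}:\tau_u=\tau\}$ for the owners of $\tau$, the sets $O(\tau)$ partition $X_{i,1}$. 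I would then condition on the round-one event $\mathcal E$ that $|O(\tau)|\le 2$ for every $\tau$; since $n^{1/2}$ tokens are drawn independently from a universe of size exceeding $n/2$, the probability of any triple collision is $O(n^{-1/2})$ by a birthday/balls-into-bins estimate, so $\mathcal E$ holds whp, which is precisely the ``whp over the random choices in the first round'' of the statement.

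Next I would analyze round two. The only edges internal to $X_{i,1}$ are those of the random graph $\mathcal G$ in which each pair of nodes of $X_{i,1}$ is joined independently with probability $1/2$; the extra edge $(v_0,v_1)$ kept for connectivity transmits to $v_1$ a single token drawn uniformly from a set of size $\Omega(n)$ disjoint from earlier blockers, which therefore lies in $B_i$ with probability $O(n^{-1/2})$ and can be discarded after a union bound. The key observation is that at the start of round two the difference set along every internal edge $(u,w)$ is the singleton $\{\tau_u\}$ (empty when $\tau_u=\tau_w$), so under \randdiff\ node $u$ deterministically transmits $\tau_u$ along every incident edge. Hence at the end of round two we have $\tau\in M(v)$ if and only if $v\in O(\tau)$ or some owner $u\in O(\tau)$ is $\mathcal G$-adjacent to $v$.

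With this reduction the two bounds are immediate. Fixing $v$ and $\tau$: if $v\in O(\tau)$ the probability is $1$, which still meets the lower bound $1/2$; otherwise $v$ misses $\tau$ exactly when it is non-adjacent to all $|O(\tau)|\le 2$ owners, an event of probability $(1/2)^{|O(\tau)|}\in[1/4,1/2]$, so $\Pr[\tau\in M(v)]\in[1/2,3/4]$ and I would take $p=3/4$. For the independence clause, fix $v$ and note that $\{\tau\in M(v)\}$ depends only on the edges $\{(v,u):u\in O(\tau)\}$; since the owner sets partition $X_{i,1}$ these edge groups are pairwise disjoint, and because the edges of $\mathcal G$ are mutually independent, the events $\{\tau\in M(v)\}_{\tau\in B_i}$ are mutually independent.

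The step I expect to be the main obstacle is reconciling the owner case with the stated two-sided bound: a node owning $\tau$ keeps it with probability $1$, exceeding $p$, so the upper bound $p<1$ holds only for the at least $n^{1/2}-2$ non-owner nodes, while the lower bound $1/2$ holds for all nodes. I would argue this is harmless because each token has at most two owners under $\mathcal E$, and the only downstream use, Lemma~\ref{lem_diff_expected}, merely needs each indicator $I_\tau$ to have expectation bounded away from $0$, which follows as soon as the presence probability at a generic node lies in a fixed subinterval of $(0,1)$. A secondary point needing care is certifying $\mathcal E$ and discarding the spurious token possibly injected along $(v_0,v_1)$; both are dispatched by the balls-into-bins bound and a union bound over the $O(n^{1/2})$ relevant tokens.
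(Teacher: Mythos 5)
Your proof follows the same core strategy as the paper's: control the multiplicity of each token's copies in $X_{i,1}$ after the first round, then read off the presence probability of $\tau$ at $v$ from adjacency events in the $G(n^{1/2},1/2)$ random graph of the second round, with the lower bound $1/2$ coming from at least one owner and the upper bound from at most constantly many owners. The one substantive difference is quantitative, and it matters: you cap multiplicities at $2$ via a birthday estimate, so your conditioning event $\mathcal{E}$ holds only with probability $1-O(n^{-1/2})$, whereas the paper caps multiplicities at an arbitrary constant $c$, paying $p=1-2^{-c}$ but gaining a failure probability $\binom{n^{1/2}}{c}(1/|M_i|)^{c}=O(n^{-c/2})$ that can be made an arbitrarily small inverse polynomial. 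Since this lemma is invoked once per phase and there are $\Theta(\sqrt{n}/\log n)$ phases, your bound union-bounds only to $1-O(1/\log n)$, which falls short of the $1-1/\mathrm{poly}(n)$ guarantee that ``whp'' denotes elsewhere in the paper (compare the $1/n^{9}$ in Lemma~\ref{lem_not_beyond_logn}); the fix is exactly the paper's move --- replace your triple-collision event by ``no token has more than $c$ copies'' and take $p=1-2^{-c}$, after which the rest of your argument survives verbatim. On the positive side, you make explicit several steps the paper leaves implicit: that \randdiff\ transmits deterministically in round two because every pairwise difference within $X_{i,1}$ is a singleton, that independence across tokens follows from the disjointness of the edge groups $\{(v,u):u\in O(\tau)\}$, that the retained edge $(v_0,v_1)$ injects a $B_i$-token only with probability $O(n^{-1/2})$, and that the upper bound $p<1$ simply fails at the $O(1)$ owners of $\tau$ (where the probability is $1$) --- a defect of the lemma as stated that the paper's own proof shares, and which is harmless downstream for exactly the reason you give.
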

\begin{proof}
The probability that more than $c$ copies of a token exist in $X_{i,1}$
after the end of the first round is at most $\binom{n^{1/2}}{c}
(1/|M_i|)^c$, which can be made an arbitrarily small
inverse-polynomial by setting $c$ suitably high, since $|M_i| \ge
n/2$.  

Now, fix a node $v$ in $X_{i,1}$ and a token $\tau$ in $B_i$.  Let $\ell$
denote the number of copies of $\tau$ in the nodes of $X_{i,1}$ at the end of
the first round of phase $i$.  The probability that $\tau$ is in $v$
at the end of the second round is at least $1/2$ (since $\ell \ge 1$)
and at most $1 - 1/2^c$.
\end{proof}
}

\onlyLong{
We now show that the pre-segment step for the remaining segments of
each phase are implemented faithfully by the oblivious adversary.

\begin{lemma}
\label{lem:oblivious.pre-segment.rest}
After the $1 + \log n$ rounds introduced by the oblivious adversary
between segments $j$ and $j+1$, the probability that a given token
$\tau$ in $B_i$ is at a given node $v$ in $X_{i,j+1}$ is a positive
constant in $(0,1)$, independent of every other token in $B_i$.
\end{lemma}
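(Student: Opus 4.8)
The plan is to split the $1+\log n$ rounds into the two mechanisms the adversary uses — the $\Theta(\log n)$ clique rounds among the outer nodes of $X_{i,j}$, followed by the single biclique round onto $X_{i,j+1}$ — and to analyze them in sequence, mirroring Lemmas~\ref{lem:oblivious.pre-segment.first.first} and~\ref{lem:oblivious.pre-segment.first.second}. I would start from the precondition, guaranteed by the analysis of the run inside segment $j$ (Lemmas~\ref{lem_not_beyond_logn} and~\ref{lem:rightline.phase}), that at the end of segment $j$ the outer nodes of $X_{i,j}$ collectively hold every token of $B_i$ and no token of $\cup_{i'>i}B_{i'}$. The goal is then to show that, after the biclique round, each token $\tau\in B_i$ occupies a fixed node $v\in X_{i,j+1}$ with probability equal to a constant in $(0,1)$.

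First I would analyze the clique $C$ on the $\Theta(\sqrt n)$ outer nodes of $X_{i,j}$ under $\Theta(\log n)$ rounds of \randdiff, aiming for a well-mixed state in which, whp, each token $\tau\in B_i$ is held by a $\Theta(1)$ fraction of $C$ while each node of $C$ holds $\Theta(\sqrt n)$ tokens of $B_i$. These two statements are dual under the symmetry of the $|B_i|=\Theta(\sqrt n)$ tokens: the per-node load is conserved at $\Theta(\sqrt n)$ in expectation, and I would concentrate both the per-node load and the per-token occupancy via the method of bounded differences and Azuma's inequality, exactly as in Lemma~\ref{lem:oblivious.pre-segment.first.first}. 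For the spread itself I would run the standard epidemic estimate on a complete graph: while $\tau$ is held by $k=o(\sqrt n)$ nodes, each holder forwards $\tau$ across a $\Theta(1/\sqrt n)$ fraction of its $\Theta(\sqrt n)$ clique edges (it picks a uniformly random token from a difference set of size $\Theta(\sqrt n)$, as in Lemma~\ref{lem_not_beyond_logn}), so the expected number of new holders is a constant factor larger than $k$; this factor exceeds $1$ for our parameters, giving geometric growth that reaches $\Theta(\sqrt n)$ holders within $\Theta(\log n)$ rounds.

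Conditioning on this well-mixed state, I would then analyze the single biclique round. Fixing $\tau\in B_i$ and $v\in X_{i,j+1}$, let $S_\tau$ be the set of outer nodes holding $\tau$, so $|S_\tau|=\Theta(\sqrt n)$ by the previous step, and each $w\in S_\tau$ forwards $\tau$ to $v$ with probability $q_w=\Theta(1/\sqrt n)$, since $w$ chooses uniformly among its $\Theta(\sqrt n)$ tokens not already at $v$. Hence
\[
\Pr[\tau \text{ reaches } v] \;=\; 1-\prod_{w\in S_\tau}(1-q_w),
\]
and because $\sum_{w\in S_\tau}q_w=\Theta(1)$, this equals $1-e^{-\Theta(1)}$ up to lower-order terms, a constant in $(0,1)$ bounded away from both endpoints, as claimed. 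For the independence across tokens, I would observe that the events ``$\tau$ reaches $v$'' for distinct $\tau$ are driven by the independent random push choices on the biclique edges into $v$; the only coupling is that a single edge $w\to v$ forwards just one token, inducing a slight negative correlation between two tokens sharing a common holder, but each edge contributes only $\Theta(1/\sqrt n)$ to either event, so the correlation is $o(1)$ and the $|B_i|$ marginals behave as a near-independent product — which is all the downstream concentration arguments (Lemma~\ref{lem_diff_expected}) require.

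The step I expect to be the main obstacle is the controlled clique mixing: I must simultaneously lower-bound every token's spread to a $\Theta(1)$ fraction of $C$ and upper-bound the per-node load at $\Theta(\sqrt n)$, and these cannot be done token-by-token because the $\Theta(\sqrt n)$ tokens compete for the single push each node makes per edge. In particular, a token that happens to start at very few outer nodes spreads slowly in its first rounds, so I would need to argue that the probability of such a token remaining under-spread after $\Theta(\log n)$ rounds is inverse-polynomial (each round fails to grow it with probability a constant below $1$), so that a union bound over the $\Theta(\sqrt n)$ tokens still yields the whp conclusion. Granting the well-mixed precondition, the biclique step is then a routine Poisson / balls-into-bins computation.
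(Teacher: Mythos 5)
Your two-step decomposition — clique mixing among the outer nodes of $X_{i,j}$, then a single biclique round into $X_{i,j+1}$ — is exactly the paper's, and your biclique computation $1-\prod_{w\in S_\tau}(1-q_w)=1-e^{-\Theta(1)}$ is the same calculation the paper performs (the paper writes it as $1-(1-1/|B_i|)^{\Omega(\sqrt n)}$; your $1-e^{-\Theta(1)}$ form is in fact the correct reading of the paper's garbled ``$e/(e-1)\pm o(1)$''). Where you genuinely diverge is the clique phase, and the divergence is what creates the obstacle you flag. You aim for a \emph{well-mixed} state — each token at a $\Theta(1)$ fraction of the clique, each node holding $\Theta(\sqrt n)$ tokens of $B_i$ — via a per-token epidemic-growth argument, which forces you to handle the competition of $\Theta(\sqrt n)$ tokens for each node's single push per edge, history-dependent transmission probabilities $1/|M(w)-M(w')|$, and a simultaneous whp bound over all tokens; your sketch (``each round fails to grow with constant probability, union bound'') is not yet a proof, since reaching $\Theta(\sqrt n)$ holders needs $\Theta(\log n)$ near-doublings and the per-round growth events are not independent across rounds or across tokens. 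The paper sidesteps all of this by proving a \emph{stronger} and much simpler statement: because \randdiff\ pushes a uniformly random token from the difference set $M(w)-M(w')$, every token a node receives over the clique is one it currently lacks, so each outer node gets $\Theta(\sqrt n)$ fresh draws per round from its missing portion of $B_i$; over $\Theta(\log n)$ rounds this is a coupon-collector process, after which whp \emph{every} outer node holds \emph{all} of $B_i$. Saturation eliminates the per-token growth analysis, the load bookkeeping, and the within-clique correlation issue entirely, and it also cleans up the biclique round: every edge into $v$ then draws from essentially the same set $B_i$, so the probability is the same explicit constant for all tokens rather than your token-dependent $\Theta(1)$ bounds.

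Two further remarks. First, your caveat that the events for distinct tokens are only approximately independent — a single edge $w\to v$ carries one token, inducing an $O(1/\sqrt n)$-strength negative correlation — is more careful than the paper, which asserts independence outright; since the downstream use is a Chernoff bound as in Lemma~\ref{lem_diff_expected}, negative correlation suffices, so this is a harmless (indeed welcome) refinement. Second, if you wanted to rescue your epidemic route, the key saving fact is that the clique difference sets are contained in $B_i$ (all outer nodes share $\cup_{i'<i}B_{i'}$ by the post-phase insertions, and whp hold nothing from $\cup_{i'>i}B_{i'}$ by Lemma~\ref{lem_not_beyond_logn}), so every relevant per-edge transmission probability is at least $1/|B_i|$ uniformly in the history, letting each token's spread stochastically dominate an independent push process — but at that point you are rebuilding, with substantially more machinery, what the coupon-collector observation gives for free.
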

\begin{proof}
First, the clique over $\log n$ rounds guarantees that there is a
coupon collector process for each outer node in $X_{i,j}$, so that whp, every outer node in $X_{i,j}$ has every token in $B_i$ after
the $\Theta(\log n)$ rounds.  For the remainder of the proof, we assume that
the preceding condition holds.  

Fix $v$ in $X_{i,j+1}$ and $\tau$ in $B_i$.  In the next round, the
probability that $\tau$ is sent to $v$ is exactly $1 - (1 -
1/|B_i|)^{\Omega(\sqrt{n})}$.  This probability is $e/(e-1) \pm o(1)$ since
  $|B_i|$ is $\Omega(\sqrt{n}(1 - o(1)))$ whp and every outer node in $X_{i,j}$ has
  all of $B_i$ whp before this round.  This completes the proof of the
  desired claim.
\end{proof}
} 

\onlyLong{
{\noindent {\textbf{Post-Phase Insertion:}}} At the end of phase $i$,
  our oblivious adversary simulates the insertion process of the
  invasive adversary, using one round: in addition to the network
  edges present in the last round of the last segment of the phase,
  the adversary adds a clique over all nodes in the right line of $G$,
  excluding $v_0$.

Finally, we show that the post-phase insertion completes correctly whp, and prove the main result.

\begin{lemma} 
Whp, every node in the right line has every token in $B_i$
at the end of phase $i$.
\end{lemma}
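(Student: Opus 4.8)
The plan is to treat the single post-phase clique round as a simultaneous coupon-delivery step and to show, via a union bound over all (node, token) pairs, that every right-line node receives every token of $B_i$. The argument has two parts: first I would exhibit a large set of right-line nodes that already hold \emph{all} of $B_i$ just before the clique round, and then I would argue that in the clique round these nodes flood all of $B_i$ to every other right-line node whp.

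For the first part, I would reuse the inter-segment machinery. By Lemma~\ref{lem:oblivious.pre-segment.rest} (and the coupon-collector argument in its proof), whp, at the end of each segment $j$ of phase $i$ every outer node of $X_{i,j}$ holds all of $B_i$. Since the Post-Segment Shifting step moves only the $(i,j)$-inner nodes to the left line and sends the outer nodes to the right end, these outer nodes remain in the right line for the rest of the phase, and under \randdiff\ they never lose a token. Taking a union over the $\sqrt{n}/3$ segments, whp there are $\Omega(n)$ right-line nodes each holding all of $B_i$ at the end of phase $i$; in particular, every token $\tau \in B_i$ is held by $\Omega(n)$ right-line nodes. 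I would condition on this whp event for the remainder of the proof.

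For the second part, fix a target right-line node $v$ and a token $\tau \in B_i$. In the clique round, each holder $u$ of $\tau$ forwards $\tau$ to $v$ independently with probability $1/|M(u)|$, so the probability that $v$ fails to receive $\tau$ is $\prod_{u : \tau \in M(u)} (1 - 1/|M(u)|)$, taken over the $\Omega(n)$ holders. A union bound over the at most $n$ choices of $v$ and the $|B_i| = O(\sqrt{n})$ choices of $\tau$ then has to beat roughly $n^{3/2}$ (node, token) pairs. The hard part will be lower-bounding each forwarding probability $1/|M(u)|$, equivalently controlling the size of the sending pool $|M(u)|$: using the $\Theta(\sqrt{n})$ bound on the effective forwarding pool that is already in force throughout this section (the same bound that makes the relevant probability $1/|B_i|$ in Lemma~\ref{lem:oblivious.pre-segment.rest}), each factor is $1 - \Omega(1/\sqrt{n})$, so the per-pair miss probability is $(1 - \Omega(1/\sqrt{n}))^{\Omega(n)} = e^{-\Omega(\sqrt{n})}$ and the union bound closes with exponential room to spare. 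The crux is therefore precisely the interplay between the number of distinct holders (which I would take to be $\Omega(n)$ from the first part) and the forwarding denominator $|M(u)|$; establishing that this denominator is $\Theta(\sqrt{n})$ rather than the full accumulated token set is the delicate estimate, and it mirrors the accounting already used earlier in the section.
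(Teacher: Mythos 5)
Your skeleton matches the paper's: exhibit $\Omega(n)$ right-line holders of each $\tau \in B_i$ before the final clique round, then show the clique floods $\tau$ to everyone and union-bound over $(v,\tau)$ pairs. Your Part 1 is a valid (slightly different and arguably cleaner) route: the paper instead uses the per-node constant receipt probability from Lemma~\ref{lem:oblivious.pre-segment.rest} at the $\sqrt{n}-\log n$ nodes of each interval to get $\Omega(n)$ holders of each individual token, whereas you use the coupon-collector whp statement to get $\Omega(n)$ nodes holding \emph{all} of $B_i$; either suffices.

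The genuine gap is in Part 2, and you have correctly located but not closed it. First, the forwarding model is off: under \randdiff\ (as used in Lemmas~\ref{lem_diff_expected} and~\ref{lem_not_beyond_logn}), a holder $u$ sends $v$ a uniform token from the \emph{difference} $M(u) \setminus M(v)$, so the relevant denominator is $|M(u) \setminus M(v)|$, not $|M(u)|$. This matters because the $\Theta(\sqrt{n})$ bound you hope for is provably false for $|M(u)|$: by the end of phase $i$, any right-line holder of all of $B_i$ contains $\cup_{i' \le i} B_{i'}$, of size $i\sqrt{n}$, which in late phases is $\Theta(n/\log n)$. With that denominator the per-edge probability drops to $\Theta(\log n / n)$, the per-pair miss probability is only $e^{-\Theta(\log n)}$, and your union bound over $\approx n^{3/2}$ pairs no longer closes "with exponential room to spare." Second, the mechanism that rescues the bound is a phase induction that your proposal never sets up: the paper proves this lemma by induction on $i$, and the induction hypothesis (every right-line node already holds $\cup_{i' < i} B_{i'}$ at the end of phase $i-1$), combined with Lemma~\ref{lem_not_beyond_logn} (no token of $\cup_{i' > i} B_{i'}$ reaches the right line during phase $i$), sandwiches every right-line token set between $\cup_{i' < i} B_{i'}$ and $\cup_{i' \le i} B_{i'}$. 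Hence $M(u) \setminus M(v) \subseteq B_i$ has size at most $\sqrt{n}$, the per-edge probability is at least $1/\Theta(\sqrt{n})$, and the miss probability is $(1 - 1/\Theta(\sqrt{n}))^{\Omega(n)} = e^{-\Omega(\sqrt{n})}$ as you wanted. Your closing remark that the denominator estimate "mirrors the accounting already used earlier" gestures at this, but without the induction hypothesis and the switch from $|M(u)|$ to $|M(u)\setminus M(v)|$ there is no way to pin the pool at $\Theta(\sqrt{n})$, so the proposal as written does not constitute a proof.
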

\begin{proof}
The proof is by induction on phases.  For convenience, we set the base
case to $i = 0$ with $B_0$ being the empty set; so the claim is
trivially true.  We now consider the induction step, which concerns
phase $i$.  Fix arbitrary token $\tau$ in $B_i$ and an arbitrary $u$
in the right line.  We argue that if $u$ does not have $\tau$ prior to
the last round of phase $i$, then the probability that node $u$ does
not receive token $\tau$ in the last round is at most
$e^{-\Omega(\sqrt{n})}$.

By Lemma~\ref{lem:oblivious.pre-segment.rest}, there are at least $\sqrt{n}-\log{n}$
nodes among the nodes in $(i,j)$-interval that receive $\tau$ with at
least a constant probability in the pre-segment token insertion
process.  So whp, $\Omega(n)$ nodes in the right line have token
$\tau$ before the last round of the phase is executed.  Consider the
clique among the nodes of the right line in the last round of the
phase.  By Lemma~\ref{lem_not_beyond_logn} and the induction
hypothesis, the size of the difference of the sets of tokens in two
neighboring nodes in the right line is at most $n^{1/2}$.  So the
probability that token $\tau$ is not sent by one of these $\Omega(n)$
nodes to a node missing $\tau$ is at most $1-1/\Theta(n^{1/2})$.

Since each link is independent, the probability that token $\tau$ is
not sent through any of these links is at least
$(1-1/n^{1/2})^{\Omega(n)} = e^{-\Omega(n^{1/2})}$.  Applying a union
bound, we obtain that the probability that a token from $B_i$ is
missing at any node in the right line at the end of phase $i$ is at
most $n^{3/2}e^{-\Omega(n^{1/2})}$, completing the proof of the
desired claim.
\end{proof}

Thus, we can claim the following theorem.
} 
\begin{theorem}
\randdiff\ requires $\Omega(n^{3/2}/\log n)$ rounds whp under an
oblivious adversary.
\end{theorem}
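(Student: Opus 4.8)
The plan is to show that the oblivious adversary described above faithfully reproduces---up to constant factors in the insertion probabilities---the token distributions created by the invasive adversary of Section~\ref{sec:lower.randdiff.invasive}, after which the argument culminating in the invasive-adversary theorem transfers almost verbatim. Concretely, I would first invoke Lemmas~\ref{lem:oblivious.pre-segment.first.first} and~\ref{lem:oblivious.pre-segment.first.second} to certify the first-segment insertion: whp $|B_i| = n^{1/2}(1-o(1))$, and each token of $B_i$ lands in each node of $X_{i,1}$ with marginal probability in $[1/2,p]$ for a fixed constant $p<1$, independently across tokens. Lemma~\ref{lem:oblivious.pre-segment.rest} then certifies the remaining-segment insertions with the same style of constant-bounded, cross-token-independent marginals, and the post-phase lemma certifies that whp every right-line node holds all of $B_i$ at the end of phase $i$. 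Together these recreate the structural hypotheses on which the invasive analysis rested.

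The next step is to re-run Lemmas~\ref{lem_diff_expected}, \ref{lem_not_beyond_logn}, and~\ref{lem:rightline.phase} under these weaker guarantees. The only place the exact value $1/2$ entered was the computation $E[X]=\sqrt{n}/4$ in Lemma~\ref{lem_diff_expected}; with marginals in $[1/2,p]$ the per-token contribution to the gap, namely the probability that $u$ holds $\tau$ while $v$ lacks it, remains a positive constant, so $E[X]=\Omega(\sqrt{n})$ and the same Chernoff bound gives $|M(u)-M(v)|=\Omega(\sqrt{n})$ whp after subtracting the $O(\varepsilon\sqrt{n})$ tokens a node can acquire during a segment. Since Lemmas~\ref{lem_not_beyond_logn} and~\ref{lem:rightline.phase} use Lemma~\ref{lem_diff_expected} only through this $\Omega(\sqrt{n})$ gap, both carry over with adjusted constants, so whp no token of $\cup_{i'>i}B_{i'}$ reaches an $(i,j)$-outer node and every right-line node ends phase $i$ holding exactly $\cup_{i'\le i}B_{i'}$. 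I would then account for the rounds the simulation adds: two rounds per phase for the first-segment insertion, one round per phase for the post-phase insertion, and $1+\Theta(\log n)$ rounds per segment transition. Summed over the $\Theta(\sqrt{n}/\log n)$ phases and the $\Theta(n/\log n)$ segment transitions this is $O(n)$ extra rounds, which is $o(n^{3/2}/\log n)$ and hence negligible; one must also check that during these auxiliary rounds the only edges present---the direct edges, the clique on the outer nodes of $X_{i,j}$, and the $X_{i,j}$--$X_{i,j+1}$ biclique---keep every forbidden token away from the outer nodes of the right line.

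The main obstacle, I expect, is the bookkeeping that certifies the simulation rather than any single inequality. Two points deserve care. First, the invasive proof implicitly used insertion with probability exactly $1/2$ and full mutual independence, whereas the simulation delivers only constant marginals bounded away from $0$ and $1$ plus independence across distinct tokens; I must verify that the two-node joint event $\{u\text{ holds }\tau,\ v\text{ lacks }\tau\}$ has constant probability (that is, that the holdings of two nodes for the same token are not perfectly correlated by the diffusion step), which is exactly what keeps $E[X]=\Omega(\sqrt{n})$. Second, all the guarantees are whp statements and there are $\mathrm{poly}(n)$ of them across phases, segments, and auxiliary rounds; a union bound closes this, but only because Lemmas~\ref{lem_diff_expected} and~\ref{lem_not_beyond_logn} are quantified with $e^{-\Omega(\sqrt{n})}$ and $1/n^9$ slack. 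Once these two points are dispatched, the invasive-adversary theorem applies to the oblivious construction and yields the claimed $\Omega(n^{3/2}/\log n)$ lower bound.
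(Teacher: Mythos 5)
Your proposal is correct and follows essentially the same route as the paper: the paper also derives this theorem by combining Lemmas~\ref{lem:oblivious.pre-segment.first.first}--\ref{lem:oblivious.pre-segment.rest} and the post-phase insertion lemma with the invasive-adversary analysis of Section~\ref{sec:lower.randdiff.invasive}, leaving the theorem's proof implicit after these lemmas. In fact your write-up supplies glue the paper omits --- re-running Lemma~\ref{lem_diff_expected} with constant (rather than exactly $1/2$) cross-token-independent marginals, the $O(n)$ accounting of simulation rounds, the no-leakage check during auxiliary rounds, and the final union bound --- all of which are sound and consistent with the paper's intent.
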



\subsection{Lower bound for symmetric knowledge-based algorithms}
\label{sec:skb}
In this section, we present a lower bound for a broad class of
randomized algorithms for gossip, called {\em symmetric
  knowledge-based (SKB)}\/ algorithms.  We first introduce some
notation.  For round $t$, we define $a_t: U \times V \rightarrow T$,
where $U$ is the universe of all tokens and $V$ is the set of all
nodes: if $\tau$ is at $u$ at the start of round $t$, then $a_t(\tau,
u)$ is the time that $\tau$ first arrived at $u$; otherwise $a_t(\tau,
u)$ is $\bot$.

\begin{definition} 
An SKB algorithm is specified by a collection of functions $P_{t,u}: U
\rightarrow [0,1]$, where $P_{t,u}(\tau)$ is the probability with
which $u$ sends $\tau$ to each of its neighbors in round $t$,
satisfying the following properties:
\begin{mylowitemize}
\item
{\bf Token transmission:} for any $t$, if $a_t(\tau, u) =
\bot$, then $P_{t,u} = 0$, the different token sending events for a
node in round $t$ are mutually exclusive, and $\sum_{\tau \in U}
P_{t,u}(\tau) \leq 1$.
\item
{\bf Symmetry:} for any $\tau_1, \tau_2$ such that $a_t(\tau_1,u) =
a_t(\tau_2,u)$, $P_{t,u}(\tau_1) = P_{t,u}(\tau_2)$.
\end{mylowitemize}
\end{definition}
We note that the $P_{t,u}$ may differ arbitrarily from node to node
and round to round.  The symmetry property and the resulting
dependence on the arrival times of tokens are the only constraint on
the algorithm.

We now show that there exists an invasive adversary
under which \SKB takes $\Omega(\frac{n^{4/3}}{\log{n}})$ rounds to
complete $n$-gossip whp.  In order to block the progress of an
arbitrary token, the adversary inserts a subset of $m$ tokens, for a
suitable choice of $m$, at the same time as that token reaches a
node. We refer to this subset of tokens as a \textbf{Blocker Set}.  A
random selection of the blocker sets, a judicious repetition of this
process, together with appropriate network dynamics, yields the
desired lower bound.

\onlyLong{
At the start of the process, the invasive adversary takes
$\frac{n}{2\log{n}}$ of tokens arbitrarily, and forms
$\frac{n^{2/3}}{2\log{n}}$ blocker sets $B_{i,k}$ for $1\leq i \leq
\frac{n^{1/3}}{2\log{n}}$ and $1\leq k \leq n^{1/3}$, each consisting
of $n^{1/3}$ tokens.  Then, the adversary proceeds in
$\frac{n^{1/3}}{\log{n}}$ phases, each phase consisting of $\Omega(n)$
rounds, divided into $n^{2/3}$ segments. Through phase $i$, the
adversary uses blocker sets $B_{i,k}$ for $1\leq k \leq n^{1/3}$.

Throughout the process, the network is always a line, consisting of
three parts -- which we refer to as {\it left}, {\it middle} and {\it
  right}.  At the very beginning, the left part and right part are
empty, and all nodes of the network are included in the middle
part. The left most node of middle part is always called $s$ and has
all tokens in $U$.  The size of left line -- nodes at the left side of
node $s$ keeps growing with time.  \junk{ At the end of each segment,
  we move the first $ \log{n}$ nodes at the right side of node $s$, in
  the middle line to the left line, so the size of the left line at
  the start of segment $j$ of phase $i$ is $((i-1) n^{1/3} +
  j-1)\log{n}$.  \\\\ } \\\\

\noindent \textbf{Phase $i, 1\leq i \leq \frac{n^{1/3}}{2\log n}$:} At this
time, the left part has $(i-1) n^{2/3} \log{n} $ nodes, then the
adversary takes all the nodes at the right side of $s$ as the nodes in
the middle part, and makes the right part empty.

\begin{mylowitemize}
\item \textbf{Segment $j, 1 \leq j \leq n^{2/3}  $:} Segment $j$ is $n^{1/3}$ rounds. Let  $v_1, v_2, ..., v_{n^{1/3}}$ be the first $n^{1/3}$ nodes of middle part next to $s$, and call  $v_1, ..., v_{\log{n}}$ $(i,j)$-inner nodes and call $v_{\log{n}+1}, ..., v_{n^{1/3}}$ $(i,j)$-outer nodes. 
 During segment $j$, at round $k$ ($1\leq k \leq n^{1/3}$), the adversary inserts blocker set $B_{i,k}, B_{i,k-1}, ..., B_{i,1} $ to nodes $v_1, v_2, ... v_k$ respectively. 
\item {\textbf{Post-Segment Shifting $j$}:} The adversary takes the $(i,j)$-inner and $(i,j)$-outer nodes, then moves the first $\log{n}$ nodes of them to the left part, and moves rest of them, specifically $n^{1/3}-\log{n}$ nodes to the right part. 
\end{mylowitemize}

\begin{lemma} \label{segment_prob}
In phase $i$, segment $j$, no token in the set $\cup_{i'>i, 1\leq k \leq n^{1/3}} B_{i',k}$ reaches $(i,j)$-outer nodes. 
\end{lemma}
\begin{proof} 
Consider an arbitrary token $\tau$ in $\cup_{i'>i, 1\leq k \leq
  n^{1/3}} B_{i',k}$. Since segment $j$ is $n^{1/3}$ rounds, token
$\tau$ can go at most $n^{1/3}$ far from $s$. From the definition of
segment $j$, it follows that whenever a token $\tau$ reaches a node
$u$ in segment $j$, there is exactly one blocker set which is inserted
by adversary at the same round to the same node $u$. This implies that $P_{t',u}(\tau)
\leq \frac{1}{n^{1/3}}$, $t' \geq t$, assuming that arrival time of
$\tau$ at $u$ is $t$. The reason is that the algorithm cannot
distinguish between a token that has been inserted by the adversary
and a token that comes from the source. So the probability that token
$\tau$ goes one edge further is at most $\frac{1}{n^{1/3}}$, and the
probability that it goes beyond the $(i,j)$-inner nodes is at most

\[ \sum_{q=\log{n}}^{n^{1/3}} {n^{1/3} \choose q} (\frac{1}{n^{1/3}})^{q} (1 - \frac{1}{n^{1/3}})^{(n^{1/3}-q)} \leq {n^{1/3} \choose \log{n} } (\frac{1}{n^{1/3}})^{\log{n}} = o(\frac{1}{n^{10}}) \]

\end{proof}

\begin{lemma} \label{phase_prob}
Let $\tau$ be an arbitrary token from set  $\cup_{i'>p, 1\leq k \leq n^{1/3}} B_{i',k}$. Then after $p = \frac{n^{1/3}}{2 \log{n}}$ phases, with high probability token $\tau$ has not reached any of $(i,j)$-outer nodes, for $ 1 \leq i \leq p$, $1 \leq j \leq n^{1/3}$. 
\end{lemma}
\begin{proof} Using lemma \ref{segment_prob}, the probability that token $\tau$ reaches any $(i,j)$-outer node is as follows
\[
\frac{n^{1/3}}{2 \log{n}} \times n^{2/3} \times  \frac{1}{n^{10}} = o\left(\frac{1}{n^9}\right). 
\]
\end{proof}
}

\begin{theorem} 
Under an invasive adversary, \SKB requires
$\Omega(\frac{n^{4/3}}{\log{n}})$ rounds whp.
\end{theorem}




\section{Analysis of \randdiff\ under a paths-respecting adversary}
\label{sec:ft}
In this section, first we introduce a new model, the {\em
  paths-respecting}\/ adversary, under which we show that
\randdiff\ completes $n$-gossip in $\tilde{O}(n^{5/3})$ rounds whp.

\subsection{The paths-respecting model}
In the paths-respecting model we assume that there is an underlying
infrastructure network $\IG$ such that at the start of every round
$t$, the network $N_t$ laid out by the adversary is a subgraph of
$\IG$; we refer to any edge in $\IG - N_t$ as an {\em inactive} or
{\em failed edge} in round $t$.  Before presenting the model, we note
that the assumption of an infrastructure network is essentially
without loss of generality.  For instance, it captures $1$-interval
connectivity, a central dynamic network model of Kuhn et
al~\cite{Kuhn2010Distributed}: we can let $\IG$ be the complete graph
and require that $N_t$ be a connected subgraph of $\IG$ for each $t$.

\begin{definition}
\label{def:flexibly}
The {\bf paths-respecting} model places some constraints on $\IG$ and
the set of edges that the adversary can render inactive in any given
round.  In particular, we assume that for every pair $(s,d)$ of nodes
in $\IG$, there exists a set $N_{sd}$ of simple vertex-disjoint paths
from $s$ to $d$ such that the total number of inactive edges of paths
in $N_{sd}$ in any round is at most $|N_{sd}| - 1$.
\end{definition}

Before analyzing the paths-respecting model, we present two examples.
First, a natural special case of this model is one where $\IG$ is a
$\lambda$-vertex-connected graph and the adversary fails at most
$\lambda-1$ edges in each round.  If $\lambda = 2$, then a simple
example is that of a ring network in which an arbitrary edge fails in
each round.  In this example, the adversary is significantly
restricted in the number of total edges it can fail in a given round;
yet, it is not obvious how a distributed token-forwarding algorithm
can exploit this fact since for any pair of vertices, no specific path
between the two may be active for more than $n$ rounds over an
interval of $\lambda n$ rounds.  A radically different example of the
paths-respecting model in which the adversary can fail {\em a constant
  fraction}\/ of edges in each round is the following: $\IG$ consists
of a set of $r$ {\em center}\/ vertices and a set of $n-r$ {\em
  terminals}, with an edge between each center and each other vertex.
Any two vertices have at least $r-1$ vertex-disjoint paths between
them.  An adversary can remove edges between $\lfloor (r-2)/2\rfloor$
of the centers and all the terminals -- and hence, nearly half of the
edges of the network -- while satisfying the constraint that at most
$r-2$ edges are removed in any collection of $r-1$ vertex-disjoint
paths passing through the centers.

Our main result here is the analysis of \randdiff\ in the
paths-respecting model.

\begin{theorem}
\label{thm:randdiff}
Under any $n$-node paths-respecting dynamic network,
\randdiff\ completes $n$-gossip in $O(\RT \log^3 n)$ rounds whp.
\end{theorem}

Our proof of Theorem~\ref{thm:randdiff} proceeds in a series of
arguments, beginning with a restricted version of the paths-respecting
model, and successively relaxing the restriction until we have the
result for the paths-respecting model.  Fix a token $\tau$, and source
$s$ that has $\tau$ at the start of round 0.  Let $d$ be an arbitrary
node in the network.  In our analysis, we focus our attention on the
set $N_{sd}$ of vertex-disjoint paths between $s$ and $d$ such that
the total number of inactive edges of $N_{sd}$ in any round is at most
$|N_{sd}| -1$.  In Section~\ref{sec:randdiff.uniform.one.inactive}, we
analyze \randdiff\ under the assumption that the lengths of all paths
in $N_{sd}$ are within a factor of two of one another, and the
adversary fails at most one edge in any path.  In
Section~\ref{sec:randdiff.uniform}, we drop the restrictions that at
most one edge is inactive in any path and path lengths are
near-uniform, and complete the proof of Theorem~\ref{thm:randdiff}.

\junk{We first present a basic lemma that applies to \randdiff, even when
the random choices are made by an adversary.

\begin{lemma} 
\label{lem:flexibly.diff}
In the paths-respecting dynamic network, for any arbitrary pair of
nodes $s$ and $d$, all tokens of $s$ at round 0 arrive at node $d$
within $O(|N_{sd}|n)$ rounds of running \randdiff.
\end{lemma}
\begin{proof}
Since the adversary is restricted to fail at most $|N_{sd}| - 1$
edges from paths in $N_{sd}$, at each round, there exists at least one
path in $N_{sd}$ that has no failed edge. Thus, during $ 2
|N_{sd}| n$ rounds, there exists one path $p$ among paths in
$N_{sd}$, which has no failed edges for at least $2n$ (not necessarily
consecutive) rounds. In the following, we use a delay sequence
argument to show that having path $p$ active during those $2n$ rounds
suffices for tokens in $s$ to arrive at $d$ within $O(|N_{sd}|n)$
rounds.

Consider an $r$-th round that path $p$ is active before the algorithm
ends, for some arbitrary $r>0$. We use a pebble $\pi$ and initially
put it on node $d$ and move it toward $s$ along with path $p$ while we
go backward in time. Let $v_t$ denote the location of pebble at round
$t$ on the path, and $u_t$ be its neighboring node on the path and
closer to $s$. Initially, $v_r = d$. Also $M_t$ denotes the set of
missing tokens associated with the pebble at time $t$. At each round t
that $p$ is active (it has no failed edge), if a token $\tau$ is sent
from $u_t$ to $v_t$, means that node $v_t$ does not have token $\tau$
at time $t-1$. if no token is sent towards $v_t$ from $u_t$, we can
move pebble to node $u_t$ knowing that the set of missing tokens of
$u_t$ at the end of round $t-1$ is a super set of $M_t$.

By this argument, at each round that path $p$ is active either the set
of missing tokens associated to the pebble is growing or the pebble is
moved one step closer to node $s$. This process can not continue for
more than $2n$ rounds since the distance between node $s$ and where
the pebble starts the process is at most $n$ and size of set of
missing tokens is at most $n$, thus we obtain that $r\leq 2n$. This
yields the proof of lemma.
\end{proof}
}
\subsection{Near-uniform length paths and at most one inactive edge per path}
\label{sec:randdiff.uniform.one.inactive}
\begin{lemma}
\label{lem:uniform_length_one_edge_per_path}
Suppose there exists an integer $l > 0$ such that the length of each
path in $N_{sd}$ is in $[l, 2l)$.  Further suppose that in addition to
  the conditions of the paths-respecting model, for every path in
  $N_{sd}$, the adversary can fail at most one edge in the path in any
  round.  Then, the token $\tau$ is at $d$ in $O(\RT \log{n})$ rounds
  whp.
\end{lemma}

The proof of Lemma~\ref{lem:uniform_length_one_edge_per_path} is a
{\em delay sequence argument}\/ that proceeds backwards in time.
Delay sequence arguments have been extensively used in the analysis of
routing algorithms~\cite{leighton:v1}.  A major technical challenge
we face in our analysis, distinct from previous use of delay sequence
arguments, is network dynamics.  The number of possible dynamic
networks, even subject to the paths-respecting model, is huge and
our analysis cannot afford to account for them independent of the
actions of the algorithm.  

\onlyLong{
\smallskip
\noindent{{\bf Pebbles.}}
We consider a run of the algorithm for $T = c \RT \log{n}$ rounds, for
a sufficiently large constant $c$.  We show that the probability that
$d$ is missing any tokens at the end of round $T$ is $1/\poly(n)$.  In
our analysis, we use a notion of ''pebbles moving along the paths".
Each pebble is a {\em certificate}\/ for the event that a node is
missing some tokens; in particular, each pebble has an associated set,
which represents a subset of the tokens missing at the node where the
pebble is located at that time.  Note that since the analysis proceeds
backwards in time, for any node (and any pebble) this set of missing
tokens grows during the course of the analysis.  We also remark that
we use the notion of pebbles for analysis only; pebbles should not be
confused with tokens being sent around the network by \randdiff.

\smallskip
\noindent{{\bf Pebble updates.}}  The way a pebble moves along a path
is as follows.  Suppose a pebble $\pi$ is located at node $v$ with
associated missing set $M$ at the end of round $t$.  Consider a
neighbor $u$ of $v$.  We now consider cases depending on what happened
in round $t$ of \randdiff\ along edge $(u,v)$.  If $(u,v)$ was made
inactive by the adversary, then we do not gain any more information
about missing tokens at $v$ at the end of round $t-1$.  If $(u,v)$ was
active, however, and no token was sent along $(u,v)$, then we know
that the set of tokens missing at $u$ at the end of round $t-1$ is a
superset of $M$; we can depict this case by having the pebble $\pi$,
with its associated missing set $M$, move to $u$; we call this a {\em
  pebble move}.  On the other hand, if $(u,v)$ was active and a token
$\alpha$ was sent along $(u,v)$, then we can depict this case by
setting the missing set of pebble $\pi$ at the end of round $t-1$ to
be $M+\{\alpha\}$; we call this a {\em missing set update}.  Thus, the
set of missing tokens associated with a pebble is monotonically
nondecreasing.

\smallskip
\noindent{{\bf Phases and segments.}}  Our analysis groups consecutive
rounds into phases, starting from round $T$ and proceeding backwards
in time.  At the start of each phase, we have a pebble located on each
path in $N_{sd}$.  We refer to these pebbles as {\em leader pebbles};
these start from $d$ and proceed toward $s$ during the course of the
analysis.  Let the leader pebble on the $i$th path of $N_{sd}$ be
labeled $\pi_i$, and the location of $\pi_i$ at the start of round $t$
be $v_i(t)$, and let $u_i(t)$ be the adjacent node to $v_i(t)$, on
path $i$, on the side closer to $s$.  Let $L$ denote the set of leader
pebbles.  At time $t$, let $M_i(t)$ denote the missing set associated
with the pebble $\pi_i$ and $S_i(t)$ denote the set of tokens of node
$u_i(t)$.  Each phase is divided into three segments, which are
described below.

We define the {\em total distance}\/ of the pebble set at the start of any
round $t$, $\TD(t)$, to be the sum, over all $i$, of the distance
between $v_i(t)$ and $s$; note that since paths are vertex-disjoint,
by definition, the $\TD(t)$ at any instant is at most $n$; our pebble
movement process will ensure that the total distance measure is
monotonically non-increasing with decreasing $t$.

\begin{figure}[t]
\centering
\includegraphics[width=\textwidth]{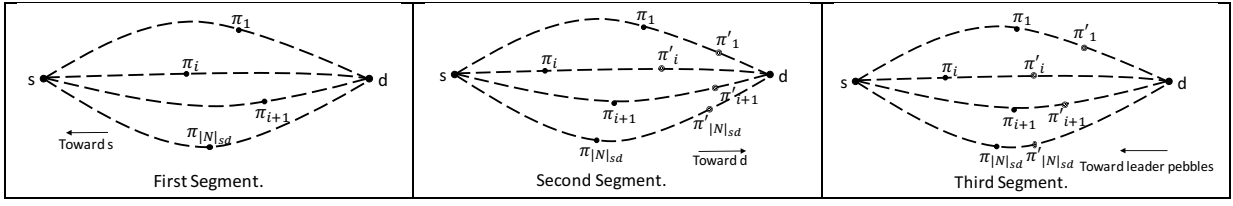}
\caption{Illustrating the three segments in the analysis of \randdiff}
\label{fig:segments}
\end{figure}

\smallskip
\noindent{{\bf Pebble Invariant.}}  We maintain the invariant that at
the start of any phase, the missing token sets that we associate with
the pebbles are all identical.  Consider the start of a phase in round
$t$.  Let $M$ denote the set of missing tokens at each of the pebbles.
At the beginning of the analysis, $t = T$, each of the $|N_{sd}|$
pebbles is located at node $d$, with the missing token set being the
set of all tokens missing at $d$ at the end of round $T$.  Thus, the
above invariant is satisfied at the start of the first phase.

\smallskip
\noindent{{\bf First Segment.}}  The first segment of any phase
consists of $T_1 = \max\{l,\frac{n}{l}\}$ rounds.  Recall that the
length of each path is in $[l, 2l)$; since the paths are
  vertex-disjoint, the number of paths is at most $n/l$.  If the
  number of times the leader pebbles move toward node $s$ during the
  $T_1$ rounds is at least $T_1/2$, then we let the second and third
  segments be empty, end this phase and proceed to the next phase.  In
  this case, to maintain the pebble invariant, for the beginning of
  next phase, we associate with each pebble the same set of missing
  tokens they had at the end of the last phase.  The second and third
  segments for the other case are described below, after the analysis
  of the first segment.  All segments are illustrated in
  Figure~\ref{fig:segments}.

\begin{lemma}[{\bf Analysis of first segment}]
\label{lem:segment_one}
Given any possible value for sets $S_i(t-T_1)$ and $M_i(t-T_1)$ of
tokens, and pebble locations at round $t-T_1$ such that $\TD (t-T_1)$
is greater than $\TD(t) - T_1/2$, we have $\bigcup_i M_i(t-T_1) \setminus M =
\Omega(T_1^{1/3}/\log{n})$ whp.  
\end{lemma}
\begin{proof}
In each of the $T_1$ rounds of the first segment, there exists at
least one path such that none of its edges are failed by the
adversary.  Therefore, in each round $t'\in [t - T_1, t]$ either at
least one leader pebble is moved toward node $s$ or at least one token
is sent from $u_i(t')$ to $v_i(t')$ for some path $i$, resulting in a
missing set update.  If the number of times the leader pebbles move
toward node $s$ during the $T_1$ rounds is at least $T_1/2$, then we
obtain that $\TD(t -T_1)$ is at most $\TD(t) - T_1/2$.  So in the
remainder of the proof, we assume that number of times that pebbles
move toward node $s$ is at most $T_1/2$.  Thus, in at least in half of
the rounds of the first segment, a token is sent from $u_i$ to $v_i$
for some $i$, resulting in a missing set update.  We consider two
separate cases:

\begin{itemize}
\item {\bf case a.} Suppose there exists a path $p_j \in N_{sd}$ such
  that at least $T_1^{1/3}$ of the $T_1/2$ token exchanges happen on path $j$.
  In this case, we infer the following for the size of set of missing
  tokens of $\pi_j$ during $[t-T_1, t]$:
\[ |M_j(t-T_1) - M| \geq T_1^{1/3}  |\bigcup_i M_i(t-T_1) - M| \geq T_1^{1/3}, \] 
completing the proof of the lemma in this case.

\item {\bf case b.} If there does not exist a path that has at least
  $T_1^{1/3}$ of the $T_1/2$ token exchanges, then there exists a set
  of paths $P$ of size at least $T_1^{2/3}/2$ such that at least one
  instance of token exchange happens on each of paths in $P$.  For
  each path in $P$ consider the instance which happens latest
  according to time during this $T_1$ rounds.  For path $p_i \in P$,
  let the latest instance of token exchange take place in time $t'_i
  \in [t-T, t]$. From the definition of \randdiff, node $u_i(t'_i)$
  sends a token $\alpha_{i}$ to $v_i(t'_i)$ from $S_i(t'_i) -
  M_i(t'_i)$ chosen uniformly at random; this implies that $S_i(t'_i)
  \cap M_i(t_i') \neq \emptyset$.  Note that $M_i(t') = M $ for $(t' >
  t'_i)$ since $t'_i$ is the latest time in segment 1, that an
  instance of token exchange has happened.  We also know that at $M
  \cup \{\alpha_i \} \subseteq M_i(t'_i)$ and $\alpha_i \notin M$.

We group the set $P$ of paths into at most $\log n$ categories: path
$p_i$ is in category $C_j$ if $S_i(t'_i) \cap M_i(t'_i)$ is in
range $[2^{j-1}, 2^{j})$.  One of these categories, say $C_j$, has
  $\Omega(T_1^{2/3}/\log(n))$ paths.  In the following, we show that
  $\bigcup_i \{\alpha_i\} = \Omega(T_1^{1/3}/\log^2(n))$ whp.  Let $k
  = 2^{j-1}$.  We say that a token has {\em high frequency}\/ if it
  appears in more than $ck\ln{n}$ different $S_i(t'_i) \cap
  M_i(t'_i)$, for a constant $c > 0$ to be specified later; otherwise,
  we call it as a token with {\em low frequency}.  If a token $\alpha$
  has high frequency, then $\alpha$ will be sent by some $u_j$ whp,
  since

\[ \Pr\{ \mbox{None of the nodes that has $\alpha$ send it}\} \leq \left(1-\frac{1}{2k}\right)^{ck \ln{n}} \le \frac{1}{n^{c/2}}  \] 

Every $S_i(t'_i) \cap M_i(t'_i)$ needs to have at least one token with
low frequency; otherwise, all its tokens will be sent by some $u_j$
whp.  Any token with low frequency will be picked $O(\log{n})$ times
whp, since the probability of sending each token in a given round is
at most $1/k$, and a token with low frequency by definition appears in
at most $ck\ln{n}$ of the sets.  Thus, by a Chernoff bound, the number
of distinct tokens sent by $S_i(t'_i) - M_i(t'_i)$ for all $i$ which
$p_i \in P$ is at least $\Omega(\frac{T_1^{2/3}}{k \log^2{n}})$ whp,
by just considering tokens with low frequencies.

If a particular $S_i(t'_i) \cap M_i(t'_i)$ has at least $k/2$ tokens
with high frequency, then the number of distinct tokens selected is at
least $k/2$; if there is no such case among any of $i$'s, then by the
above calculation using low frequency tokens, we obtain that the
number of distinct tokens selected is at least
$\Omega(T_1^{2/3}/\log^2{n})$ whp.  We thus obtain that the number of
distinct tokens sent is $\Omega(\max\{\frac{k}{2},
\frac{T_1^{2/3}}{k\log^2{n}}\})$, which is minimized or $k =
\sqrt{2}T_1^{1/3}/\log n $, yielding a bound of
$\Omega(\frac{T_1^{1/3}}{\log{n}})$.  Hence, we obtain that $\bigcup_i
M_i(t-T_1) - M = \Omega(T_1^{1/3}/\log{n})$ whp, completing the proof
of the lemma.
\end{itemize}
\end{proof}

\noindent{{\bf Second segment.}}  The second segment consists of $T_2$
rounds, where $T_2$ may vary from phase to phase.  At the start of the
second segment we introduce a new pebble $\pi'_i$ at each node $v_i$,
with associated missing token set $M_i$.  These pebbles proceed toward
$d$, updating their missing token sets as they move toward $d$.  Let
$t_m$ denote the time that it takes the last pebble $\pi'_m$ to arrive
at $d$.  Then, the missing token set at $d$ at this time is a superset
of the union, over $i$, of $M_i$, which has at least
$\Omega(T_1^{1/3}/\log(n))$ tokens more than $M$ whp, by
Lemma~\ref{lem:segment_one}.  

We consider two cases.  If $t_m$ is at most $2T_1^{1/3}$, then we set
$T_2 = t_m$, calling this an end to the second segment.  Otherwise, we
consider two subcases.  In the first subcase, the number of tokens in
the missing set associated with the pebble that arrived last is at
least $|M| + (t_m - T_1)/2$.  In this subcase, we set $T_2 = t_m$,
calling this an end to the second segment.  We obtain that the number
of missing tokens at $d$ is at least $|M| + (t_m - T_1)/2$.

In the second subcase, the number of tokens in the missing set
associated with the pebble that arrived last is less than $|M| + (t_m
- T_1)/2$.  This implies that the pebble $\pi'_m$ was blocked for at
least $(t_m - T_1)/2$ steps on its way to $d$, which in turn implies
that the pebble $\pi_m$ has increased its missing token set size by at
least $\frac{(t_m - T_1)}{2} - p$ tokens.  We now create a new copy of
pebble $\pi_m$ at $v_m$ (note that $v_m$ may have changed since the
start of the second segment) and send this pebble again toward $d$.
Again, we consider the time it takes for this new pebble to reach $d$.
If this is within $\Theta(T_1)$, or the arriving pebble gained tokens
at least a constant fraction of the time spent, we end the second
segment.  Otherwise, $\pi_m$ has gained tokens at least a constant
fraction of the time spent, in which case we repeat this argument.

We continue this until either the missing set of pebble $\pi_m$ is the
set of all tokens, or we find that $d$ is missing tokens whose size is
$|M| + \Omega(T_2)$.  It is easy to argue that under \randdiff, every
node receives at least one token in $O(n)$ rounds, so the first of the
two possibilities cannot happen.  We thus have the following lemma.

\begin{lemma}[{\bf Analysis of the second segment}]
\label{lem:segment_two}
If $T_2$ is the time taken for the second segment, then the set of
missing tokens at $d$ at time $t - T_1 - T_2$ has size at least $|M| +
\Omega(T_1^{1/3}/\log(n))$ if $T_2$ is $O(T_1)$, and at least $|M| +
\Omega(T_2)$ otherwise. \qed
\end{lemma}

\noindent{{\bf Third Segment.}}  In the third segment, we send pebbles
from $d$ to $v_i$ along each path $i$, so that the pebbles end up at
the same position as the start of the phase.  Again, we consider the
last time $t'_m$ at which the pebble on path say $m$ reaches $v_m$.
If $t'_m$ is $O(\max\{T_1,T_2\})$, then we terminate the third segment
and the phase.  In this case, by Lemma~\ref{lem:segment_two}, we
obtain at the end of this phase that for each $i$, the pebble at $v_i$
has missing set that has increased by size either
$\Omega(T_1^{1/3}/\log(n))$, if the time of the phase is $O(T_1)$, or
by at least a constant fraction times the length of the phase.

If $t'_m$ is $\Omega(\max\{T_1,T_2\})$, we find that during this
phase, the missing set of tokens at $v_m$ has increased by at least a
constant fraction times the length of the phase so far (but possibly
not at other nodes $v_i$).  As in the second segment, we send a pebble
from $v_m$ back to $d$; we repeat the argument, always having a node
whose number of missing tokens exceeds $|M|$ by a number that is at
least a constant fraction times the current duration of the phase.
This cannot go on for more than linear number of rounds, so it ends in
the situation where for all $i$, the pebble at $v_i$ has missing set
that has increased by size either $\Omega(T_1^{1/3}/\log(n))$, if the
time of the phase is $O(T_1)$, or by at least a constant fraction
times the length of the phase.

\begin{lemma}[{\bf Analysis of third segment}]
\label{lem:segment_three}
If $T_3$ is the time taken for the third segment, then each of the
pebbles at $v_i$ has an associated missing set $M'$ of tokens, where
$|M'|$ is at least $|M| + \Omega(T_1^{1/3}/\log(n))$ if $T_2$ + $T_3$
is $O(T_1)$, and at least $|M| + \Omega(T_2 + T_3)$ otherwise. \qed
\end{lemma}

We are now ready to complete the proof of
Lemma~\ref{lem:uniform_length_one_edge_per_path}.  Consider any phase
of length $L$.  If the phase ends after the first segment (hence, has
length $T_1$), we have a decrease in the total distance measure by half
the number of rounds in the phase.  Otherwise, by
Lemma~\ref{lem:segment_three}, we have whp that the number of missing
tokens associated with the pebbles at the end of the phase increases
by at least $T_1^{1/3}/\log n$, if $L = O(T_1)$, and $\Omega(T_1)$,
otherwise.  Hence, in a phase, either the rate of decrease of
total distance per round is at least $1/2$, or the rate of increase of the
number of missing tokens is at least $1/(n^{2/3} \log n)$.  Since the
total distance measure is initially $n$ and is always nonnegative, and the
number of missing tokens is initially 1 and is at most $n$, it follows
that $T$ is $O(n^{5/3} \log n)$ whp, completing the proof of
Lemma~\ref{lem:uniform_length_one_edge_per_path}.
}

\subsection{Removing restriction on path lengths and inactive edges per path}
\label{sec:randdiff.uniform}
We first extend the claim of the preceding section to the case where
the adversary can fail an arbitrary number of edges in any path of
$N_{sd}$, subject to the constraint imposed by the paths-respecting
model that the number of inactive edges in $N_{sd}$ is at most
$|N_{sd}| - 1$.  We continue to make the assumption of near-uniform
path lengths.  In a round, call a path {\em active}\/ if none of its
edges is failed, {\em 1-inactive} if exactly one of its edges is
inactive, and {\em dead} if more than one of its edges are inactive.
Since the adversary can fail at most $|N_{sd}| - 1$ edges among
$|N_{sd}|$ disjoint paths, it follows that the number of active paths
is at least one more than the number of dead paths.  This is the only
constraint we place on the adversary that we analyze in this section:
the number of active paths is at least one more than the number of
dead paths.

\begin{lemma}
\label{theo_const}
Suppose there exists an integer $l > 0$ such that the length of each
path in $N_{sd}$ is in $[l, 2l)$.  Further assume that the number of
  dead paths is in $[a,2a)$ for some $a$, in each round.  Then, under
    \randdiff, $\tau$ is at $d$ in $O(\RT \log{n})$ rounds whp.
    \junk{ Assume that in the setting with a strong adversary, the
      number of dead paths is any number between $a$ and $2a$ in each
      round.
Then the process of sending $n$ tokens from $s$ to $d$ in a network
with paths of almost uniform length using \textbf{Rand-Diff} will
complete in $O(n\log{n})$ rounds whp.}
\end{lemma}
\onlyLong{
\begin{proof}
We consider the run of \randdiff\ for $T = c'\RT \log{n}$ rounds,
where $c'$ is a sufficiently large constant.  We use the probabilistic method~\cite{alon+s:prob} to show that
in the time interval $[1, T]$, there exists a set $R$ of $c\RT\log{n}$
(not necessarily consecutive) rounds in which there is a subset $P$ of
paths such that none of the paths from $P$ is dead and at least one of
paths in $P$ is active in each round in $R$.  (Here, the constant $c$
can be made sufficiently large by choosing $c'$ appropriately.)  We
then invoke Lemma~\ref{lem:uniform_length_one_edge_per_path} to
establish the desired claim.

All that remains is to establish the existence of $P$ and $R$ as
required above.  We choose a set $P$ of paths by picking each path
independently with probability $\frac{1}{a+1}$.  From the assumption
of the lemma, we know that any round has at most $2a$ number of dead
paths.  Therefore, the probability that none of these dead paths are
in $P$ is at least $(1-\frac{1}{a+1})^{2a} \geq \frac{1}{e^2}$.
By assumption, we have that in any round there exists at least $a+1$
active paths.  Therefore, the probability that there is at least one
active path among paths in $P$ in round $r$ is at least
$1 -(1 - \frac{1}{a+1} )^{a+1} \geq 1 - \frac{1}{e}$.

Since paths in $P$ are picked independently, the probability that in a
certain round, no path in $P$ is dead and at least one path in $P$ is
active is at least $\frac{1}{e^2}(1 - \frac{1}{e})$.  Thus, the
expected number of rounds that have no dead paths and have at least
one active path in $P$ is $c' \frac{e-1}{e^3} \RT\log{n}$. For $c'
\frac{e-1}{e^3} \geq c$, there exists a set $P$ of paths and set $R$
of at least $c\RT\log{n}$ rounds such that:
\begin{itemize}
\item for all $p$ in $P$ and $r$ in $R$, $p$ is either active or
  1-inactive during round $r$.
\item for each $r$ in $R$, there exists at least one path $p$ in $P$
  such that $p$ is active in round $r$.
\end{itemize}
This establishes the existence of $P$ and $R$ as desired, and
completes the proof of the lemma.
\end{proof}
} 

Now, we extend Lemma~\ref{theo_const} by removing the constraint on
the number of dead paths.

\begin{lemma}
\label{lem:uniform_length}
Suppose there exists an integer $l > 0$ such that the length of each
path in $N_{sd}$ is in $[l, 2l)$.  Then, in the paths-respecting
  model, using \randdiff, the token $\tau$ is at $d$ in $O(\RT\log^2 n)$
  rounds whp.
\end{lemma}
\onlyLong{
\begin{proof}
We divide the $c' \RT \log^2 n$ rounds into $\log n$ classes, where the
$i$th class consists of rounds in which the number of dead paths is in
the interval $[2^i, 2^{i+1})$.  By simple averaging, we obtain that
    there is at least one class with at least $c'\RT\log{n}$ rounds, and
    the number of dead paths in any round in this class is in $[2^i,
      2^{i+1})$ for some integer $i > 0$.  We now apply
      Lemma~\ref{theo_const} to establish the desired claim.
\end{proof}
} 

We now complete the proof of Theorem~\ref{thm:randdiff} \onlyLong{,
  restated below, } by removing the assumption of near-uniform path
lengths in the paths of $N_{sd}$.  This is a standard argument in
which we incur another multiplicative factor of $\log n$ in our bound.

\onlyLong{
\begin{theorem}
In the paths-respecting model, \randdiff\ completes gossip in
$O(\RT \log^3(n))$ rounds whp.
\end{theorem}
\begin{proof} 
Let $\tau$ be any token located at a source node $s$ at the start of
round $0$, and let $d$ be any other node.  Consider the set $N_{sd}$
of paths from $s$ to $d$ with the property that the number of edges which fail in paths
of $N_{sd}$ in any round is at most $|N_{sd}| - 1$.

We divide these $|N_{sd}|$ paths into $\log n$ groups such that group
$i$ ($1 \leq i \leq \log{n}$) includes paths of length between $2^{i}$
and $2^{i+1}-1$.  Let $\lambda_i$ shows the number of paths in group $i$.
By simple averaging, we obtain that in any round there is a group $i$
of paths such the number of edges removed by the adversary from
that group is at most $\lambda_i -1$ in that round.  Then if we consider
run of algorithm for $T = c'\RT\log^{3}{n}$ rounds, there is a group $i$
of paths such that for at least $c'\RT\log^2{n}$ rounds the number of
edges removed by the adversary from that group in each of these
rounds is at most $\lambda_i -1$.  We now apply Lemma~\ref{lem:uniform_length}
to derive that $d$ receives token $\tau$ in at most $T$ rounds whp.

Since the above claim holds for each token $\tau$, a union bound
yields us that \randdiff\ completes gossip in $O(\RT\log^{3}{n})$ rounds
whp.
\end{proof}
} 

\section{Centralized $k$-gossip in $\min\{nk, \tilde{O}((n+k)\sqrt{n})\}$ rounds} 
\label{sec:centralized}
In this section we present a centralized algorithm that completes
$k$-gossip in $\tilde{O}((n+k)\sqrt{n})$ rounds against any oblivious
adversary.  Since $k$-gossip can be completed in $nk$ rounds by
separately broadcasting each token over $n$ rounds, this yields a
bound of $\min\{nk, \tilde{O}((n+k)\sqrt{n})\}$ on centralized
$k$-gossip using token forwarding.

\onlyLong{
We begin by arguing that a $\tilde{O}(n^{3/2})$-round algorithm for
$n$-gossip implies a $\tilde{O}((n+k)\sqrt{n})$-round algorithm for
$k$-gossip.  We first make the assumption that $k$ is a multiple of
$n$.  When $k$ is not a multiple of $n$, then we add distinct dummy
tokens to make the total number of tokens a multiple of $n$.  Given
that the bound we seek is at least linear in $n + k$, this maintains
the asympototic complexity of the bound.  When $k$ is less than $n$,
we introduce $n-k$ dummy tokens.  When $k$ exceeds $n$, we group the
$k$ tokens into $\lfloor k/n \rfloor$ sets of $n$ tokens and one set
of less than $n$ tokens.  With this reduction, it is easy to see that
an $\tilde{O}(n^{3/2})$-round $n$-gossip algorithm implies an
$\tilde{O}((n+k)\sqrt{n})$-round $k$-gossip for arbitrary $k$.

}
\onlyShort{ In the full paper, we give a simple argument
that a $\tilde{O}(n^{3/2})$-round algorithm for $n$-gossip implies a
$\tilde{O}((n+k)\sqrt{n})$-round algorithm for $k$-gossip.}  We
  present our centralized algorithm for $n$-gossip in two parts.  We
  first solve a special case of $n$-gossip -- $n$-broadcast -- in
  which all the tokens are located in one node.  We then extend the
  claim to arbitrary initial distributions of the $n$ tokens.  We
  start by introducing two useful subroutines: \emph{random load
    balancing} and \emph{greedy token exchange}.


\subsection{Random load balancing and greedy token exchange}
\label{sec:centralized.subroutines}
In the random load balancing subroutine, we have a set $F$ of nodes,
each of which contains the same set $T$ of at least $n$ \emph{items}
(each item is a copy of some token), and a set $R$ of nodes such that
$F \cup R$ is the set of all $n$ nodes.  The goal is to distribute the
items among nodes in $R$ such that the following properties hold at
the end of the subroutine: (B1) each item in $T$ is in exactly one
node in $R$; (B2) every node has either $\lfloor |T|/|R| \rfloor$ or
$\lceil |T|/|R|\rceil$ items; (B3) the set $X$ of items placed at any
subset $S \subseteq R$ of nodes is drawn uniformly at random from the
collection of all subsets of $T$ of size $\abs{X}$.

\noindent{{\bf LoadBalance}$(F,T,R)$:} Assign a rank to each item in
$T$ using a random permutation.  In round $i$, $i \in [|T|]$:
\begin{NoIndentEnumerate}
\item
Identify a node $v \in R$ that has been distributed fewer than
$\lfloor |T|/|R| \rfloor$ items yet, and is closest to a node in $F$,
say $v_0$, among all such nodes in $R$.
\item
Let $P$ denote a shortest path from $v_0$ to $v$.  Let $\ell$ be the
number of edges in $P$, and let $(v_{j-1}, v_j)$, $0 \le j < \ell$,
denote the $j$th edge in $P$; so $v_{\ell} = v$.  Then, $v_0$ sends
item of rank $i$ to $v_1$; in parallel, for every edge
$(v_{j-1},v_j)$, $1 \le j < \ell$, $v_{j-1}$ sends an arbitrary item
it received earlier in this subroutine to $v_j$.
\end{NoIndentEnumerate}

\begin{lemma}
\label{lem:centralized.balance}
The subroutine {\bf LoadBalance}$(F,T,R)$ completes in $|T|$ rounds
and satisfies the properties~(B1), (B2), and (B3).
\end{lemma}
\onlyLong{
\begin{proof}
The number of rounds taken by the subroutine is by construction.
Property~(B1) is satisfied since no copies of items are made.  In each
round, the number of items placed at exactly one node in $R$ increases
by 1, while the number of items at other nodes remains the same.  So
the total number of items placed at the nodes in $R$ at the end of $T$
rounds is exactly $|T|$.  Furthermore, no node in $R$ receives more
than $\lceil |T|/|R| \rceil$ items; this establishes property~(B2).
Finally, property~(B3) is satisfied since the items are placed in
order of a random permutation.
\end{proof}
} 

The greedy token exchange is a one round subroutine in which the goal
is to maximize the number of new tokens received at each node in that
round.

\smallskip
\noindent{\bf GreedyExchange}: Fix a round.  For each node $v$, let
$S(v)$ be the set of tokens that node $v$ has at the start of the
round.  Let $N_v$ denote the set of neighbors of $v$.  Let $U_v$ be
the set $\cup_{u \in N_v} S(u) \setminus S(v)$.  For each node $v$, we
perform the following operations.  Construct a bipartite graph $H_v$,
in which one side is the set $N_v$, and the other side is the set
$U_v$. For each $u \in N_v$ and $\tau \in U_v$, there is a link
between $u$ and $\tau$ if token $\tau \in S(u)$.  Compute a maximum
bipartite matching $M_v$ in $H_v$.  If $(\tau, u)$ is in $M_v$, then
$u$ sends token $\tau$ to $v$.

\begin{lemma}
\label{lem:centralized.greedy}
In each round, the subroutine {\bf GreedyExchange} maximizes, for each
node $v$, number of new tokens that can be added to the node in that round.
\end{lemma}
\onlyLong{
\begin{proof}
Since each node can send a distinct token on each of its incident
edges, the problem of maximizing the number of distinct tokens
received by a node is independent of the same problem for a different
node.  By construction of the bipartite graph, for every possible set
$S$ of tokens arriving at a node $v$, we have a bipartite matching
$M_v$ in $H_v$ such that for every $\tau \in S$, there exists an edge
$(\tau, u)$ for some $u$ in $N_v$.  Similarly, every bipartite
matching $M_v$ corresponds to a valid set of token transfers to $v$ in
the network at that round.  Thus, {\bf GreedyExchange} maximizes, for
each node $v$, number of new tokens that can be added to the node in
that round.
\end{proof}
} 

\subsection{$n$-broadcast}
\label{sec:centralized.broadcast}
We now present a $\widetilde{\Theta}(n^{3/2})$-round algorithm for
$n$-broadcast, where all tokens are located initially in a single
node. The algorithm consists of $O(\log n)$ {\em stages}.  Let $U$
denote the set of all $n$ tokens.  We now describe each stage.  Call a
node {\em full}\/ if it has all of the $n$ tokens at the start of the
stage, and {\em non-full}\/ otherwise.  Let $R$ denote the set of
non-full nodes at the start of the stage, and let $r = |R|$.  The
stage consists of $\Theta(\sqrt{n}\log n)$ identical phases.  Each
phase consists of a sequence of steps divided into two segments:
distribution and exchange.

\onlyLong{
\begin{enumerate}
\item
}
\onlyShort{
\noindent
}
{\bf Distribution segment}: Distribute the $n$ tokens among the
non-full nodes $R$ in the network, as evenly as possible, in $n$
rounds by running {\bf LoadBalance}$(F, R, U)$.

\onlyLong{
\item
}
\onlyShort{
\noindent
}
{\bf Exchange segment}: Starting with the distribution of tokens as
specified in the preceding distribution segment; i.e., each full node
has all tokens, and each non-full node has exactly the tokens
distributed in the above segment, run $n$ rounds of {\bf
  GreedyExchange} maximizing the total number of new
tokens received by the nodes in each round.  
\junk{
Let
$S_i(u)$ denote the set of tokens that node $u$ has at the start of
round $i$.  In the $i$th round:
\begin{itemize}
\item
For each node $v$, we construct a flow network $F_v$.  Let $N_v$
denote the set of neighbors of $u$ in the dynamic network in round
$i$.  The vertex set of $F_v$ is the set $\{f_v\} \cup \{f_u: u \in
N_v\} \cup \{f_\tau: \mbox{ token } \tau\} \cup \{f_s\}$, where $f_s$
is a special source vertex.  There is an edge from $f_u$ to $f_v$ for
every edge $(u,v)$ that exists in the network in round $i$.  For every
token $\tau$ and every $u$ in $N_v$ such that $u$ has $\tau$ at the
start of round $i$, there is an edge from $f_\tau$ to $f_u$.  Finally,
for every token $\tau$, there is an edge from $f_s$ to $f_\tau$.
Every edge has unit capacity.  We compute a maximum flow from $f_s$ to
$f_v$.  In round $i$, we send a token $\tau$ from $u$ to $v$ if and
only if the maximum flow has a unit flow traversing the path $f_s
\rightarrow f_\tau \rightarrow f_u \rightarrow f)v$.
\end{itemize}
}
\onlyLong{
\end{enumerate}
}

\onlyLong{
\begin{lemma}
\label{lem:broadcast.phase}
If $R$ is the set of non-full nodes at the start of a stage, then
during any phase of the stage, the sum, over all nodes in $R$, of the
number of tokens received by the nodes is $\Omega(|R| \sqrt{n})$.
\end{lemma}
\begin{proof}
Fix a stage and a phase of the stage.  Consider the following initial
distribution of tokens at the start of the phase: each full node at
the start of the stage has all of the tokens, while each non-full node
has no token.  At the start of any round, we use {\em configuration}\/
to refer to the set of tokens that a node has at the start of the
round, starting from the preceding initial token distribution. 

Fix a round of the exchange segment.  We first consider the case in
which there exists a round in which the number of different
configurations at the start of the round, $m$, is less than
$\sqrt{n}$.  We number the $m$ configurations arbitrarily from $1$ to
$m$, and let $n_i$ denote the number of nodes in the $i$th
configuration.  By properties~(B1) and~(B2) of
Lemma~\ref{lem:centralized.balance}, after the distribution segment,
each non-full node started with $\lfloor n/r \rfloor$ or $\lceil n/r
\rceil$ distinct tokens that are unique among all non-full nodes.
Therefore, in any set of $\ell$ nodes that have the same
configuration, every node has at least $\ell \lfloor n /r \rfloor$
tokens.  Thus, the sum, over each node, of the number of tokens in the
node is at least $\lfloor n/r\rfloor \sum_{i = 1}^m n_i^2$, where
$\sum_i n_i = r$ and $m \le \sqrt{n}$.  Under these conditions,
$\sum_{i = 1}^m n_i^2$ is minimized when each $n_i = r/m$, yielding
the number of tokens received by the non-full nodes to be at least
$\lfloor n/r \rfloor m\cdot r^2/m^2 = \lfloor n/r \rfloor r^2/m$.  If
$r > n/2$, then we have $\lfloor n/r \rfloor r^2/m = r^2/m \ge r
\sqrt{n}/2$ since $m < \sqrt{n}$.  If $r \le n/2$, then we have
$\lfloor n/r \rfloor r^2/m \ge (n/r - 1)r^2/m \ge r\sqrt{n} -
r\sqrt{n}/2 = r\sqrt{n}/2$ since $m < \sqrt{n}$.  We thus have the
desired claim that the sum, over all nodes in $R$, of the number of
tokens received by the node is $\Omega(|R| \sqrt{n})$.

We next argue that if $m$ is at least $\sqrt{n}$, then either the
number of new token arrivals in the round, over all non-full nodes, is
$\Omega(\sqrt{n})$, or the total number of tokens already received by
the non-full nodes in this phase is $\Omega(r \sqrt{n})$.  Let $C_1$
through $C_m$ denote the $m$ configurations in this round.  We
construct an auxiliary graph in which each vertex is a configuration,
and we have an edge between vertices $C_i$ and $C_j$ if there is an
edge between a node having configuration $C_i$ and a node having
configuration $C_j$ in this round.  Note that since the network in
each round is connected, so is the auxiliary graph.  

Let ${\cal T}$ denote an arbitrary spanning tree in this auxiliary
graph.  Consider the set ${\cal S}$ of stars ${\cal T}$ formed by the
edges in either the odd levels of ${\cal T}$ or the even levels of
${\cal T}$, whichever is greater.  The number of edges in ${\cal S}$
is at least $\sqrt{n}/2$.  Let $S$ be a star in ${\cal S}$ with
configuration $C_0$ as the root and $C_i$ as the $i$th leaf.  Each
edge $(C_0, C_i)$ in ${\cal S}$ corresponds to an edge, say $(u_i,
v_i)$, where $u_i$ and $v_i$ hold configurations $C_0$ and $C_i$,
respectively, at the start of this round.  Furthermore, all the $v_i$s
are distinct nodes, while the $u_i$'s may not be distinct.  

We call an edge $(u_i, v_i)$ {\em bad} if $C_0 \subset C_i$ and $C_0$
contains all of the tokens that were at any node in configuration
$C_i$ {\em in the initial distribution of the Exchange segment};
otherwise, we call the edge good.  Note that if an edge $(u_i, v_i)$
is good then we have two cases: if $C_0$ is not a subset of $C_i$,
then we can transfer a token in $C_0 \setminus C_i$ from $u_i$ to
$v_i$; if $C_0$ does not contain all of the tokens that were at any
node in configuration $C_i$ in the initial distribution of the
Exchange segment, then we can transfer a token from $v_i$ to $u_i$
that is distinct from any other token that can be sent from $v_j \neq
v_i$ to $u_i$.  It thus follows that if the number of distinct
$u_i$'s, over all the stars in ${\cal S}$, is at least $\sqrt{n}/4$ or
the number of good edges is at least $\sqrt{n}/4$, then we can
identify a token transfer along the edges $(u_i, v_i)$ such that the
total number, over all the nodes in the stars, of the distinct tokens
received by the node in the round is at least $\sqrt{n}/4$.  By
Lemma~\ref{lem:centralized.greedy}, the exchange segment guarantees
that the number of token transfers in this round is at least
$\sqrt{n}/4$.

It remains to consider the case where the number of distinct $u_i$'s
is less than $\sqrt{n}/4$ and the number of good edges is less than
$\sqrt{n}/4$.  In this case, let $u$ be a node in a root configuration
$C_0$ of star $S$ that has bad edges to $\ell$ nodes, say $v_1$
through $v_\ell$, with configurations $C_1$ through $C_\ell$,
respectively.  By the definition of bad edges, we obtain that $u$ has
all the tokens that all the nodes with configuration $C_1$ through
$C_\ell$ had in their initial distributions.  Since $C_1$ through
$C_\ell$ is a superset of $C_0$, it follows that if $x$ is the number
of tokens in $\cup_{0 \le i \le \ell} C_i$, then each of the nodes
having any of configurations $C_0$ through $C_\ell$ has at least $x$
tokens.

We consider the following partition of all the nodes of the graph.
Let $S$ be a star in ${\cal S}$ with root $C_0$, the $i$th leaf being
given by $C_i$, and network edge $(u_i, v_i)$ corresponding to the
auxiliary graph edge $(C_0, C_i)$.  We define a group of the partition
to be the union of set of the nodes with configuration $C_0$ and the
union, over all $i$ such that $(u_i, v_i)$ is a bad edge, of the set
of nodes with configuration $C_i$.  If $(u_i, v_i)$ is a good edge, we
have the set of nodes with configuration $C_i$ form their own group.
Since the total number of distinct $u_i$'s, over all the stars in
${\cal S}$, is at most $\sqrt{n}/4$ and the number of good edges is at
most $\sqrt{n}/4$, the above procedure partitions all the nodes into
at most $\sqrt{n}$ groups such that in each group, every node has all
of the tokens that every node in its group had in the initial token
distribution of the Exchange step.  Using the same calculation as the
first case above, we obtain that in this case the total useful token
exchange already achieved is $\Omega(r \sqrt{n})$.
\end{proof}
} 

\onlyLong{
\begin{lemma}
\label{lem:broadcast.progress}
After $\Omega(\sqrt{n} \log n)$ phases starting from a set $R$ of $r$
non-full nodes, there exist at least $r/3$ nodes in $R$ that receive
all tokens whp.
\end{lemma}
\begin{proof}
Suppose the number of non-full nodes remains at least $2r/3$ after $c
\sqrt{n} \log n$ phases, where $c$ is an arbitrary constant whose
value will be set later in the proof.  Then, by
Lemma~\ref{lem:broadcast.phase}, the sum, over all nodes, of the total
number of tokens received at the node during these phases is at least
$2c r n(\log n)/3$.  Note that since each phase is implemented
starting from an initial distribution in which every node not in $R$
has all tokens, while every node in $R$ has no tokens, the set of
tokens received by a node in a phase may intersect the set of tokens
received by the same node in another phase.

Since a full node does not receive any tokens during the distribution
and exchange segments, and any node receives at most $n$ tokens in a
phase, we obtain from an averaging argument that at least $r/3$
non-full nodes each receives at least $c n (\log n)/2$ tokens in this
stage; otherwise, the total number of token exchanges in $c \sqrt{n}
\log n$ phases is less than $(2r/3) c n (\log n)/2 + (r/3) c n (\log
n) = 2crn(\log n)/3$, a contradiction.

Consider any node $v$ that receives at least $2c n (\log n)/3$ tokens,
taken over all the $c\sqrt{n} \log n$ phases in this stage; note that
while the tokens received in a phase are distinct, these tokens are
not necessarily distinct across phases.  Suppose $v$ receives $p_i$
tokens in phase $i$.  By property~(B3) of
Lemma~\ref{lem:centralized.balance}, the $p_i$ tokens distributed to
$v$ in phase $i$ are drawn uniformly at random from the set of all
tokens.  Therefore, by a standard coupon collector argument, we obtain
that if $c$ is sufficiently large, $v$ has all of the $n$ tokens with
high probability, and thus becomes full after $c\sqrt{n} \log n$
phases.  This completes the proof that at least $r/3$ nodes that were
non-full at the start of the stage become full after $c \sqrt{n} \log
n$ phases.
\end{proof}
} 

\begin{theorem}
\label{thm:broadcast}
The $n$-broadcast problem completes in ${O}(n^{3/2}{\log^2{n}})$
rounds whp.
\end{theorem}
\onlyLong{
\begin{proof}
By Lemma~\ref{lem:broadcast.progress}, we obtain that there exist at
least $n/3$ nodes that have received all tokens after $O(n^{3/2} \log
n)$ rounds whp.  The remaining problem is that of
disseminating the $n$ tokens among $2n/3$ non-full nodes.  Applying
Lemma~\ref{lem:broadcast.progress} repeatedly $O(\log n)$ times
completes the proof of the theorem.  \junk{ The total number of rounds
  equals
\[
\sqrt{n} (n + n/2 + n/4 + \cdot + 1 + k) \log n.
\]
}
\end{proof}
}

\subsection{$n$-gossip}
\label{sec:centralized.gossip}
Our centralized algorithm for arbitrary $n$-gossip instances is as follows.

\noindent
{\bf Consolidation stage}: (a) For each token $i$, in sequence: for
$\sqrt{n}$ rounds, every node holding token $i$ broadcasts token $i$
(i.e., flooding of token $i$); (b) Identify a set $S$ of
$\tilde{O}(\sqrt{n})$ nodes such that every token is in some node in
$S$; arbitrarily assign each token to a node in $S$ that has the
token.

\noindent
{\bf Distribution stage:} Each node in $S$ makes $\sqrt{n}$ copies of
each of its allocated tokens, for a total of $n^{3/2}$ tokens in all,
including copies.  If any node in $S$ has a token multiset of fewer
than $n$ tokens, then it adds dummy tokens to the multiset to make it
of size $n$.  Let $T_u$ denote the multiset of tokens at $u$.  For
each node $u$ in $S$, we ensure that each node receives a distinct
random token from the multiset of $u$: {\bf LoadBalance}$(\{u\}, V,
T_u)$.

\junk{
The procedure is similar to the distribution segment
in Section~\ref{sec:centralized.broadcast}, and runs as follows.  Let
$X$ denote the multiset of tokens at $u$.
\begin{enumerate}
\item Identify a node $v$ that has been distributed fewer than
  $\lfloor |X|/n \rfloor$ tokens, and is closest to $u$, among all
  such nodes.
\item
Let $P$ denote a shortest path from $u$ to $v$.  Let $\ell$ be the
number of edges in $P$, and let $(v_{j-1}, v_j)$, $0 \le j < \ell$,
denote the $j$th edge in $P$; so $u = v_0$ and $v_{\ell} = v$.  Then,
$v_0$ sends token of rank $i$ to $v_1$; in parallel, for every other
edge $(v_{j-1},v_j)$, $1 \le j < \ell$, $v_{j-1}$ sends an arbitrary
token it received earlier in this distribution segment to $v_j$.
\end{enumerate}
}

\noindent
{\bf Exchange stage:} Maximize the number of token exchanges in each
round by repeatedly calling {\bf GreedyExchange}, until some node, say
$s$, has at least $n - c\sqrt{n}\log n$ tokens, for a constant $c$
that is chosen sufficiently large.  If $n$-gossip is not yet
completed, then: (a) Run $n$-broadcast with source $s$ to complete the
dissemination of the $n - c\sqrt{n} \log n$ tokens at $s$; (b) 
Run at most $c \sqrt{n} \log n$ separate broadcasts, spanning $n$
rounds, disseminating the remaining at most $c\sqrt{n} \log n$ tokens
to all nodes.

\onlyLong{
\begin{lemma}
\label{lem:gossip.consolidation}
The consolidation stage takes $n^{3/2}$ rounds, at the end of which we
can find a set $S$ of at most $O(\sqrt{n} \log n)$ nodes that together
contain all of the tokens whp.
\end{lemma}
\begin{proof}
The running time of the consolidation stage is immediate, since each
token broadcast period consists of $\sqrt{n}$ rounds.

Next, consider a set $S$ of $c\sqrt{n} \log n$ nodes selected uniformly
at random from the set of all nodes.  The probability that for a given
token $\tau$, $S$ does not include any of the at least $\sqrt{n}$
nodes that have $\tau$ after the consolidation phase is at most
\begin{eqnarray*}
\binom{n - \sqrt{n}}{|S|}/\binom{n}{|S|}  & = & \frac{(n - \sqrt{n})\cdots(n-\sqrt{n}-|S| + 1)}{n \cdots n-|S|+1}\\
& \le & \left(1 - \frac{\sqrt{n}}{n-|S|+1}\right)^{|S|}\\
& \le & \left(1 - \frac{2}{\sqrt{n}}\right)^{c\sqrt{n} \log n}\\
& \le & 1/\poly(n),
\end{eqnarray*}
for $n$ sufficiently large, and $c$ a sufficiently large constant.  By
applying a union bound over the $n$ tokens, we get the desired claim.
\end{proof}
}

\onlyLong{
\begin{lemma}
\label{lem:gossip.distribution}
Consider any phase of the distribution stage, in which node $v$
distributes its multiset $T_v$ of tokens among the $n$ nodes.  Each
node receives at least $\lfloor |T_v|/n\rfloor$ tokens from $T_v$, and
the tokens arriving at a node are a subset of $T_v$ drawn uniformly at
random from $T_v$.
\end{lemma}
\begin{proof}
Since each leader node assigns a random rank to each token copy it
has, by property~(B3) of Lemma~\ref{lem:centralized.balance}, the
token distribution process ensures that each node receives tokens of
random ranks.  Furthermore, by properties~(B1) and~(B2) of
Lemma~\ref{lem:centralized.balance}, each node receives $\lfloor
|T_v|/n \rfloor$ or $\lceil |T_v|/n \rceil$ tokens.
\end{proof}
\begin{lemma}
\label{lem:gossip.exchange}
After $O(n^{3/2} \log n)$ rounds of the exchange stage, we have a node
that has at least $n - \sqrt{n}\log n$ tokens whp.
\end{lemma}
\begin{proof}
Define a {\em configuration}\/ to be the set of tokens that a node has
at any time.  Let $m$ denote the number of distinct configurations
that are present at the start of any round.  If $m$ is at least
$\sqrt{n}/(c\log n)$, for a constant $c > 0$ chosen suitably large
later, then as in Lemma~\ref{lem:broadcast.phase}, we argue that the
sum, over all nodes, of the number of distinct tokens received by the
node in the round is at least $\sqrt{n}/(c\log n)$ (this follows from
Lemma~\ref{lem:centralized.greedy}, which establishes that {\bf
  GreedyExchange} maximizes the number of tokens exchanged in any
given round).  We can be in this case for at most $cn^{3/2} \log n$
rounds since each node receives at most $n$ distinct tokens.

In the second case, that is, where $m$ is at most $\sqrt{n}/(c \log
n)$, there exist at least $\sqrt{n} \log n$ nodes that have the same
set of tokens.  We now argue that any set of $c\sqrt{n} \log n$ nodes
together have at least $n - c\sqrt{n} \log n$ tokens, for $c$ chosen
suitably large.

Fix a set $X$ of $c \sqrt{n} \log n$ nodes.  Consider the $i$th phase
of the distribution stage in which tokens from a multiset set $T_v$
are distributed from a node $v$.  Let $\alpha_i$ be the number of
tokens in $\Gamma \cap T_v$.  By Lemma~\ref{lem:gossip.distribution},
the nodes in $X$ together receive a subset of $\lfloor |T_v|/n \rfloor
|X|$ tokens, chosen uniformly at random from the multiset $T_v$.  Note
that every token in $T_v$ has $\sqrt{n}$ copies in $T_v$ and $|T_v|$
is at least $n$; therefore, the probability that none of these
$\alpha_i$ distinct tokens in $\Gamma \cap T_v$ are in $X$ is at most
\begin{eqnarray*}
\frac{\binom{|T_v| - \sqrt{n}\alpha_i}{\lfloor |T_v|/n \rfloor |X|}}{\binom{|T_v|}{\lfloor |T_v|/n \rfloor |X|}}
& \le & \left(1 - \frac{\lfloor |T_v|/n \rfloor |X|}{|T_v| - \lfloor |T_v|/n \rfloor |X|}\right)^{\lfloor |T_v|/n\rfloor c \sqrt{n} \log n}\\
& \le & \left(1 - \frac{\sqrt{n} \alpha_i}{|T_v| - \lfloor|T_v|/n\rfloor c \sqrt{n} \log n}\right)^{\lfloor |T_v|/n\rfloor c \sqrt{n} \log n}\\
& \le & \left(1 - \frac{\sqrt{n}\alpha_i}{|T_v|}\right)^{c|T_v|\log (n)/\sqrt{n}}\\
& \le & e^{-c \alpha_i \log(n)/2}.
\end{eqnarray*}
(In the second last inequality, we use the fact that $|T_v|$ is at
least $n$, which implies $\lfloor |T_v|/n \rfloor \le |T_v|/(2n)$.  In
the last inequality, we use the fact that $1 - x \le e^{-x}$ for $0
\le x < 1$.)

Since the choice of the random permutation in each phase of the
distribution stage is independent of the choice in any other phase, we
obtain that the probability that none of the tokens in $\Gamma$ are in
$X$ is at most $\prod_i e^{-c \alpha_i \log(n)/2} \le e^{-c \sum_i
  \alpha_i \log(n)/2} = e^{-c^2 \sqrt{n} \log^2 (n)/2}$.  The number of
  different choices for $X$ and $\Gamma$ is $\binom{n}{c \sqrt{n} \log
    n}^2 \le (e\sqrt{n}/(c \log n))^{c\sqrt{n} \log n} \le e^{c\ln 2
    \sqrt{n} \log^2 n }$.  Applying a union bound, we achieve a high
  probability bound on the event that there exists a node with at
  least $n - c \sqrt{n} \log n$ tokens.

\junk{the probability that a particular
set $\Gamma$ of $c \sqrt{n} \log n$ tokens is not in $X$ at the end of
the distribution stage is at most $\prod_i (1 - \alpha_i/\sqrt{n})^{c
  \sqrt{n} \log n} \le e^{-c \sum_i {\alpha_i} \log n}$, where
$\alpha_i$ is the number of tokens in $\Gamma$ that participate in the
$i$th distribution step of the distribution phase.  Since $\sum_i
\alpha_i = c \sqrt{n} \log n$, we obtain that the preceding
probability is at most $e^{-c^2 \sqrt{n} \log^2 n}$.  
}
\end{proof}
} 

\begin{theorem}
Our centralized algorithm completes in $O(n^{3/2} \log^2 n)$ rounds, whp.
\end{theorem}
\onlyLong{
\begin{proof}
The consolidation stage takes $O(n^{3/2} \log n)$ rounds.  The
distribution stage takes $O(n^{3/2})$ rounds.  The exchange stage
takes $O(n^{3/2} \log^2 n)$ rounds for the $n$-broadcast (by
Theorem~\ref{thm:broadcast}) and $O(n^{3/2}\log n)$ rounds for
broadcasting the last $O(\sqrt{n} \log n)$ tokens.  The high
probability successful completion follows from
Lemma~\ref{lem:gossip.exchange}, the correctness of $n$-broadcast
(Theorem~\ref{thm:broadcast}), and part~(b) of the exchange stage.
\end{proof}
} 

\section{Concluding remarks}
\label{sec:concs}
Our work has focused on the basic question of whether there exists a
fully distributed $n$-gossip protocol that runs in sub-quadratic time,
i.e., $O(n^{2-\epsilon})$ rounds (for some positive constant
$\epsilon$), or even faster $O(n \ \polylog n)$ rounds, under an
oblivious adversary.
We showed that somewhat surprisingly, \randdiff, a potentially strong
candidate for a fast distributed algorithm, has a
$\tilde{\Omega}(n^{3/2})$ lower bound.  Morever, for symmetric
knowledge-based algorithms (SKB), we showed a lower bound of
$\tilde{\Omega}(n^{4/3})$ under invasive adversaries, a stronger
version of oblivious adversaries.  We complemented these results with
two upper bounds.  First, we showed that \randdiff\ can complete
$n$-gossip in subquadratic time --- $\tilde O(n^{5/3})$ --- under a
restricted oblivious adversary which has to respect some
infrastructure-based path constraints.  We believe the analysis of
\randdiff, in fact, extends to the more efficient \symdiff\ protocol
as well, and also to more general path-respecting adversaries.
Second, we presented a centralized algorithm that achieves a $\tilde
O(n^{3/2})$ bound using under any oblivious adversary.

Our work leaves several intriguing open problems and directions for
future research: Is there a hybrid of \randdiff\ and a knowledge-based
algorithm that can achieve sub-quadratic complexity?  What is the best
bound for $n$-gossip achieved by centralized token-forwarding?
Explore paths-respecting and related models further to gain a better
understanding of network dynamics from a practical standpoint.

\junk{
These, to our best
of the knowledge, are the first sub-quadratic bounds for
token-forwarding algorithms under an oblivious adversary.}

\bibliographystyle{acm}
\bibliography{tokenforwarding,dynamic,RR}

\end{document}